\documentclass[a4paper, 11pt]{article}

\usepackage{fullpage}
\usepackage[english]{babel}
\usepackage[latin1]{inputenc}
\usepackage{mathrsfs}
\usepackage{amsfonts}
\usepackage{amssymb}
\usepackage{amsmath}
\usepackage{amsthm}
\usepackage{graphicx}
\usepackage[usenames,dvipsnames]{color}
\usepackage[colorlinks=true,citecolor=blue,linkcolor=magenta]{hyperref}
\usepackage{lmodern}
\usepackage{microtype}
\usepackage{braket}
\usepackage{tabularx,colortbl}
\usepackage{pifont}
\usepackage{mathtools}
\usepackage{color}
\usepackage{dsfont}
\usepackage{algpascal}
\usepackage{algorithm} 
\usepackage{algpseudocode}
\usepackage{chngcntr}
\usepackage{complexity}
\usepackage{float} 
\usepackage[table, x11names]{xcolor}
\usepackage{epstopdf}
\epstopdfsetup{update}
\usepackage{braket}
\usepackage{caption}   
\usepackage{subfig}   
\usepackage[normalem]{ulem}
\usepackage[titletoc,title]{appendix}

\newtheorem{theorem}{Theorem}
\newtheorem{corollary}[theorem]{Corollary}

\newtheorem{lemma}{Lemma}

\newcommand{\comment}[1]{} 
\newcommand{\Norm}[1]{\left\lVert#1\right\rVert} 
\newcommand{\norm}[1]{\lVert#1\rVert} 
\newcommand{\idty}[1]{\mathbb{1}} 
\newcommand{\ovsqrt}[1]{\frac{1}{\sqrt{2}}}
 
\newcommand{\tr}[1]{\mathrm{Tr}}

\newcommand{\QEX}{\mathrm{QEX}}
\newcommand{\EX}{\mathrm{EX}}
\newcommand{\reals}{\mathbb{R}}
\newcommand{\Ex}{\operatorname*{\mathbb{E}}\nolimits}
\def\Pr{\mathop{\rm Pr}\nolimits}

\usepackage[backend=biber, natbib=true, bibstyle=alphabetic,citestyle=alphabetic, doi=true, sorting=nyt, giveninits]{biblatex}
\addbibresource{bibliography.bib}

\title{Learning DNFs under product distributions via $\mu$--biased quantum Fourier sampling}

\begin{document}

\author{Varun Kanade\thanks{University of Oxford. Email: \href{mailto:varunk@cs.ox.ac.uk}{varunk@cs.ox.ac.uk}} \and Andrea Rocchetto\thanks{University of Oxford and University College London. Email: \href{mailto:andrea.rocchetto@spc.ox.ac.uk}{andrea.rocchetto@spc.ox.ac.uk}} \and Simone Severini\thanks{University College London and Shanghai Jiao Tong University. Email: \href{mailto:s.severini@ucl.ac.uk}{s.severini@ucl.ac.uk}}}

\date{}
\maketitle

\begin{abstract}
	We show that DNF formulae can be quantum PAC-learned in polynomial time
	under product distributions using a quantum example oracle. The current best
	classical algorithm	runs in superpolynomial time. Our result extends the work by Bshouty and
	Jackson (1998) that proved that DNF formulae are efficiently learnable under
	the uniform distribution using a quantum example oracle. Our proof is based
	on a new quantum algorithm that efficiently samples the coefficients of a
	$\mu$--biased Fourier transform.
\end{abstract}

\section{Introduction}

Whether the class of Boolean functions that can be expressed as polynomial size
formulae in disjunctive normal form (DNF) is \emph{probably approximately
correct} (PAC) learnable in polynomial time, or not, is one of the central
unresolved questions in the PAC learning framework introduced
by Valiant~\cite{valiant1984theory}. Currently, the best classical algorithm for this
problem has running time $2^{\tilde{O}(n^{1/3})}$~\cite{klivans2001learning}. A number of variants of this problem have been studied by relaxing the
requirements---primarily by requiring the learning algorithm to work only when
the underlying distribution is uniform, as well as providing greater power to
the learning algorithm, e.g. access to a membership
query oracle~\cite{jackson1994efficient, awasthi2013learning}, a labeled random
walk~\cite{bshouty2005learning}, or a quantum example
oracle~\cite{bshouty1998learning}.

Two settings in which it is possible to show learnability of DNF formulae under specific
assumptions are particularly relevant to our work. First, when the distribution
is uniform, a quasi-polynomial (in fact an $n^{O({\mathrm{log}(n)})}$)
algorithm is known~\citep{verbeurgt1990learning}. Second, in the
\textit{membership query} (MQ) model, where the learner can query an
oracle for the value of the unknown function at a given point in the domain,
Jackson gave a polynomial time algorithm for learning DNFs that works with respect to
both the uniform and product distributions~\cite{jackson1997efficient}.

Bshouty and Jackson extended the PAC learning framework to a setting where learners have
access to quantum resources~\cite{bshouty1998learning}. Key results in the area
of quantum learning theory are reviewed in a recent survey by
Arunachalam and de Wolf~\cite{arunachalam2017survey}. The two main factors that distinguish a quantum
PAC-learner from a classical one are the ability to query an oracle that can
provide examples in quantum superposition and access to a quantum computer to
run the learning algorithm. 
Two measures of interest in the PAC learning framework are \emph{sample complexity}--the
worst-case number of examples required to learn a class of functions, and
\emph{time complexity}--the worst-case running time of a learner for that
concept class; clearly the sample complexity is at most the time complexity,
but may in principle be significantly smaller. It has been shown that the
quantum PAC model gives only a constant factor advantage in terms of sample
complexity with respect to the classical
analogue~\citep{arunachalam2016optimal}. Certain results suggest that there is a separation between the 
classical and quantum PAC model when considering the time complexity of
some learning problems. When learning with respect to the uniform distribution,
the class of polynomial-size DNF formulae~\citep{bshouty1998learning} and
$k$-juntas~\citep{atici2007quantum} under the uniform distribution are known to
be efficiently quantum PAC-learnable (note that the learnability of $k$-juntas
is implied by the result on DNFs). In the classical setting, in both these
cases, current best known algorithms are quasi-polynomial time algorithms
(assuming $k = \omega(1)$). While no formal hardness results are
		known in the standard PAC framework, it would be highly surprising if a
polynomial time algorithm for these algorithms in the classical setting was
discovered. Recently, Daniely and Shalev-Shwartz proved PAC learning hardness
for DNFs under the assumption that no fixed polynomial-time algorithm can
refute random k-SAT formulae~\cite{daniely2016complexity}.
Information-theoretic lower bounds are known in more restricted models, such as
the statistical query model, which suggest these classes cannot be learned in
polynomial time~\citep{blum1994weakly}. In the context of learning in the
presence of noise, \citet{cross2015quantum} proved that parity functions under
the uniform distribution can be efficiently learned using a quantum example
oracle.  Classically, the problem is widely believed to require subexponential,
but superpolynomial, time~\citep{blum2003noise,lyubashevsky2005parity}. The
result of Cross, Smith, and Smolin was extended to the \emph{learning with errors} (LWE) problem and to
more general error models in~\cite{grilo2019learning}.

\subsection{Overview of our results}
\noindent
We show that DNF formulae under constant-bounded product distributions can be learned in
polynomial time in the quantum PAC model. Our proof builds on the work by Feldman
 for learning DNFs under the product distribution
using membership queries~\cite{feldman2012learning}. Feldman's proof is in turn based on a result
by Kalai, Samordintsky, and Teng that shows that DNFs can be approximated by heavy
low-degree Fourier coefficients alone~\cite{kalai2009learning}. Notably, Feldman's result also applies
to learning settings where the examples are drawn from a product distribution,
i.e. a distribution that factorises over the elements of the input vector. 

The only part of Feldman's algorithm that makes use of membership queries is
the subroutine that approximates the Fourier spectrum of $f$. The approximation
is obtained using the \textit{Kushilevitz-Mansour} (KM)
algorithm~\citep{kushilevitz1993learning}, for the case of uniform
distributions, and the \textit{extended Kushilevitz-Mansour} (EKM)
algorithm~\citep{kalai2009learning}, for the case of product distributions.
Bshouty and Jackson showed that it is possible to approximate the Fourier
coefficients of $f$ using quantum Fourier sampling,  a technique introduced
by~\cite{bernstein1997quantum}. For a Boolean function ($\pm 1$-valued)
	the sum of the squares of the Fourier coefficients is 1 and as a result the
	Fourier transform represents a distribution over the Fourier basis weighted
	by the square of the corresponding Fourier coefficient; this distribution is
	denoted by $\hat{f}^2$.  The technique of~\cite{bernstein1997quantum} allows one to sample efficiently from this distribution
using the \textit{Quantum Fourier Transform} (QFT).

In order to extend the result by~\citet{bshouty1998learning} for learning
	under product distributions, it is sufficient to find a quantum technique to
	sample from the Fourier distribution corresponding to the underlying product
	distribution.  As in the case of the uniform distribution, the sum of the
	squared Fourier coefficients is $1$ and the basis functions are orthonormal
	with respect to the inner product defined with respect to the product
	distribution~\citep{bahadur1961representation,furst1991improved}.  We will
call the resulting Fourier transform the $\mu$-\textit{biased
Fourier transform}, where $\mu$ is the product distribution.

In this work we introduce the $\mu$-biased \textit{quantum} Fourier transform.
We show the validity of our construction in two steps. First, we explicitly
construct a unitary operator that implements the single qubit transform. Then we argue
that this construction can be efficiently implemented on a quantum circuit with
logarithmic overhead. By exploiting the factorisation of product distributions,
we show how to build an $n$-qubit transform as a tensor product of $n$ single
qubit transforms.
Our algorithm does not require prior knowledge of the parameter  $\mu$ that characterises the product distribution. This can be estimated efficiently via sampling and we show that the error introduced due to sampling can be suitably bounded.

The main technical contribution of this paper is a quantum algorithm to
approximate the heavy, $\mu$--biased, low-degree Fourier spectrum of $f$ for
constant-bounded product distributions without using membership queries (recall
that membership queries are necessary in Feldman's classical algorithm). This
can be interpreted as a quantum version of the EKM algorithm for approximating
the low-degree Fourier coefficients of $f$. We provide rigorous upper bounds on
the scaling of the algorithm using the Dvoretzky--Kiefer--Wolfowitz theorem, a
concentration inequality that bounds the number of samples required to estimate
a probability distribution in the infinity norm. The learnability of DNFs under
the product distribution immediately follows from an application of the quantum
EKM algorithm to Corollary~5.1 in~\cite{feldman2012learning}. 

\subsection{Related work}
\noindent
The learnability of DNF formulae under the uniform distribution using a quantum
example oracle was first studied by~\citet{bshouty1998learning}, who in the same
paper, also introduced the quantum PAC model.  Their approach to learning DNF
was built on the \textit{harmonic sieve} algorithm previously developed
by~\citet{jackson1997efficient}. Jackson's algorithm exploits
a property of DNF formulae known as concentration of Fourier spectrum. More specifically,
Jackson used the fact that for every $s$-term DNF and for
every probability distribution $\mathcal{D}$, there exists a parity $\chi_a$
such that $|\Ex_\mathcal{D} [f \chi_a] |\geq 1/(2s +
1)$. This implies that for every $f$ and
$\mathcal{D}$ there exists a parity that weakly approximates $f$. In the
harmonic sieve algorithm, the boosting algorithm of~\cite{freund1995boosting},
is then used to turn the weak learner into a strong one.  The only part of the
harmonic sieve algorithm that requires membership queries is the KM algorithm
used to find the weakly approximating parity function. Bshouty and Jackson
consider the setting where the examples are given by a quantum example oracle
and replace the KM algorithm with quantum Fourier
sampling due to \citet{bernstein1997quantum}.
 
\citet{jackson2002quantum}~studied the learnability of DNFs in the quantum
membership model~(where the quantum example oracle is replaced by an oracle
that returns $f(x)$ for a given $x$). By using the quantum Goldreich-Levin
algorithm developed by~\citet{adcock2002quantum}, they were able
to obtain a better bound on the query complexity with respect to the best
classical algorithm. We recall that the classical KM algorithm can be derived
from the Goldreich-Levin theorem, an important result that reduces the
computational problem of inverting a one-way function to the problem of
predicting a given hard-predicate associated with that
function~\cite{goldreich1989hard}. The result in~\cite{adcock2002quantum} shows that this
reduction can be obtained more efficiently when considering quantum functions
and quantum hard--predicates. A different quantum implementation of the
Goldreich-Levin algorithm was given in~\cite{montanaro2010quantum}.

\subsection*{Organisation}
\noindent
We describe notation and important background concepts in
Section~\ref{sec:prelims}. In Section~\ref{sec:muQFT} we define the
$\mu$--biased quantum Fourier transform and discuss some of its properties. In
Section~\ref{sec:muspectrum} we introduce an efficient quantum algorithm to
sample from the Fourier coefficients of the $\mu$--biased Fourier transform and
show how this can be used to prove the PAC-learnability of DNF formulae under
product distributions. We conclude in Section~\ref{app:error} we bound the
error introduced by approximating $\mu$.

\section{Preliminaries}
\label{sec:prelims}
\noindent
\subsection{Notation}
\noindent
We denote vectors with lower-case letters. For a vector $x\in \mathbb{R}^n$, let $x_i$ denotes the $i$-th element of $x$. If $x$ is sparse we can describe it using only its non-zero coefficients. We call this the \textit{succinct representation} of $x$. For an integer $k$, let $[k]$ denote the set $\{1,\dots, k\}$. We use the following standard norms: The $\ell_0$ ``norm'' $\norm{x}_0 = |\{i\in[k] | x_i \neq 0\}|$, the $\ell_2$ norm $\norm{x}_2 = \sqrt{\sum_{i\in [k]} x_i^2}$, and the $\ell_\infty$ norm $\norm{x}_\infty = \max_{i\in [k]} \{|x_i|\}$. 

Let $f:\mathbb{R}\rightarrow \mathbb{R}^+$ and $g:\mathbb{R}\rightarrow \mathbb{R}^+$. We use $f(n)=O(g(n))$ to indicate that the asymptotic scaling of $|f|$ is upper-bounded, up to a constant factor, by $g(n)$. If the bound is not asymptotically tight we write $f(n) = o(g(n))$. Similarly, $f(n) = \Omega(g(n))$ indicates that the asymptotic scaling of $|f|$ is lower-bounded, up to a constant factor, by $g$. If the bound is not asymptotically tight we write $f(n) = \omega(g(n))$. The notation $f(n) = \Theta(g(n))$ indicates that $f$ is bounded both above and below by $g$ asymptotically. The notations $\tilde{O}(g(n))$ and $\tilde{\Omega}(g(n))$ hide logarithmic factors.

The probability that an event $E$ occurs is denoted by $\Pr[E]$. Given a set $A$ the indicator function $\mathbf{1}_{A} : A \rightarrow \{0,1\}$ takes values $\mathbf{1}_{A}(x) = 0$ if $x \notin A$ and $\mathbf{1}_{A}(x) = 1$ if $x \in A$.
Let $X$ be a continuous or discrete random variable.
By abuse of notation, we write  $\mathcal{D}_X(X = x)$ to indicate both the \textit{probability density function} (pdf) of the distribution at $x$ (for continuous variables) and the probability that $X = x$ (for discrete variables).
For a realisation $x$ we often write $\mathcal{D}_X(x)$ rather than $\mathcal{D}_X(X = x)$.
We continue the presentation assuming $X$ is a continuous variable as the notation extends straightforwardly to the discrete case.  
If a pdf $\mathcal{D}_X$ depends on a parameter $\mu$ we write $\mathcal{D}_X (x;\mu)$.
For convenience we often drop the subscript and write the pdf as $\mathcal{D} (x)$. Note that using this notation $\mathcal{D} (x)$ and $\mathcal{D}(y)$ denote different pdfs, referring to two different random variables $X$ and $Y$, respectively.
The notation $X \sim \mathcal{D}$ indicates that the random variable has pdf $\mathcal{D}$ while the notation $x\sim \mathcal{D}$ indicates that $x$ is sampled according to $\mathcal{D}$. 
For a finite sequence $S = (x^{(1)}, \dots, x^{(n)})$, $S\sim\mathcal{D}^n$ indicates that the sequence $S$ is \textit{independent and identically distributed} (i.i.d.) according to $\mathcal{D}$. 
The expected value of a random variable $f(X)$ is denoted as $\Ex_{X\sim \mathcal{D}} [f(X)] = \int f(x) d\mathcal{D}(x) = \int f(x) \mathcal{D}(x) dx$, where we assumed that $X$ has density $\mathcal{D}$ with respect to the Lebesgue measure.
When $\mathcal{D}$ is the uniform distribution we omit the distribution in the subscript and write $\Ex [\, \cdot \,]$.
We often use $\Ex_{\mathcal{D}} [\, \cdot \,]$ to indicate $\Ex_{X\sim\mathcal{D}} [\, \cdot \,]$. 
By abuse of notation, and only when there is only a single random variable involved in the discussion, we write $\Ex_{\mu} [\, \cdot \,]$ to indicate $\Ex_{X \sim \mathcal{D}(\cdot ; \mu)} [ \, \cdot \,]$. We use similar notation for $\Pr$.

\subsection{Fourier analysis over the Boolean cube}
\noindent
Let $x\in \{-1,1\}^n$ and let $f$ and $g$ be real-valued functions defined over
the Boolean hypercube $f,g:\{-1,1\}^n \rightarrow \reals$. The space of real
functions over the Boolean hypercube is a vector space with inner product $
\langle f,g \rangle = \frac{1}{2^n}\sum_{x\in \{0,1\}^n} f(x)g(x) =
\Ex[f \cdot g]$ where the expectation is taken uniformly over all $x\in
\{0,1\}^n$. A \textit{parity function} $\chi_a:\{-1,1\}^n \rightarrow \{-1,1\}$
labels a $x\in \{-1,1\}^n$ according to a characteristic vector $a \in
\{0,1\}^n$ and is defined as $\chi_a (x) = (-1)^{a\cdot x}$ where $a\cdot x = \sum_{i=1} ^n a_i x_i$. The set of parity
functions $\{\chi_a \}_{a\in\{0,1\}^n}$ forms an orthonormal basis for the
space of real-valued functions over the Boolean hypercube. This fact implies that we
can uniquely represent every function $f$ as a linear combination of parities,
the \textit{Fourier transform} of $f$.  The linear coefficients, known as the
\textit{Fourier coefficients}, are given by the projections of the function
into the parity base and are denoted with $\hat{f}(a) = \langle f,\chi_a
\rangle = \mathbb{E}[f(x)\chi_a(x)]$. The set of Fourier
coefficients is called the \textit{Fourier spectrum} of $f$ and is denoted by
$\hat{f}$, which can also be seen as a $2^n$ dimensional vector in
$\mathbb{R}^{2^n}$. For a set $S \subseteq \{0,1\}^n$, $\hat{f}(S)$ denotes the
vector of all Fourier coefficients with indices in $S$. The \textit{degree} of
a Fourier coefficient $\hat{f}(a)$ is $\norm{a}_0$. Let $B_d = \{a \in
\{0,1\}^n \, | \, \norm{a}_0 \leq d\}$. We denote by $\hat{f}(B_d)$ vector of
all degree--$\leq d$ coefficients of $f$. The squared Fourier coefficients are
related by Parseval's identity $\Ex[f^2] = \sum_a \hat{f} (a) ^2 =
\norm{\hat{f}}_2 ^2$. This implies that for any $f:\{-1,1\}^n \rightarrow
[-1,1]$, $\sum_a \hat{f} (a) ^2  \leq 1$. The equality holds if $f$ is
Boolean--valued, i.e. if $f:\{-1,1\}^n \rightarrow \{-1,1\}$ and therefore $\hat{f} ^2$ forms a probability distribution. Let $\gamma>0$, we say that a Fourier coefficient $\hat{f}(a)$ is
$\gamma$-\textit{heavy} if it has large magnitude $|\hat{f}(a)|>\gamma$. By Parseval's identity the number of $\gamma$-heavy Fourier coefficient is at most $\norm{\hat{f}}_2 ^2 /\gamma$.
 
The Fourier spectrum of a function $f$ can be approximated using the KM
algorithm in $\ell_\infty$ norm. The KM algorithm, based upon a celebrated result by
\citet{goldreich1989hard}, requires \emph{membership query} (MQ) access to $f$ (\textit{i.e.} it
requires an oracle that for every $x\in \{-1,1\}^n$ returns $f(x)$).
\begin{theorem}[KM algorithm]
\label{theo:KM}
	Let $f:\{-1,1\}^n \rightarrow [-1, 1]$ be a real--valued function and let $\epsilon>0$, $\delta>0$. Then, there exists an algorithm with oracle access to $f$ that, with probability at least $1-\delta$, returns a succinctly represented vector $\tilde{f}$ such that $\norm{\hat{f} - \tilde{f}}_\infty \leq \epsilon$ and $\norm{\tilde{f}}_0 \leq 4/\epsilon^2$. The algorithm runs in $\tilde{O}(n^2\mathrm{log}(1/\delta)/\epsilon^6)$ time and makes $\tilde{O}(n\mathrm{log}(1/\delta)/\epsilon^6)$ queries to $f$.
\end{theorem}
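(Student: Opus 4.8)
The plan is to prove this by the Kushilevitz--Mansour divide-and-conquer search over the Fourier spectrum (equivalently, the Goldreich--Levin algorithm~\cite{goldreich1989hard, kushilevitz1993learning}). Split the coordinates as $x=(y,z)$ with $y$ the first $j$ coordinates and $z$ the remaining $n-j$, and for a prefix $\alpha\in\{0,1\}^j$ define the \emph{bucket weight}
\[
  W_\alpha \;=\; \sum_{\beta\in\{0,1\}^{n-j}} \hat{f}(\alpha\beta)^2 ,
\]
where $\alpha\beta$ denotes concatenation. Two structural facts drive everything. First, $W_\alpha = W_{\alpha 0} + W_{\alpha 1}$, so the buckets refine along a binary tree of depth $n$ with root weight $W_{\mathrm{root}} = \sum_a \hat{f}(a)^2 \le 1$ (Parseval) and length-$n$ leaves equal to $\hat{f}(a)^2$. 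Second, each $W_\alpha$ is an expectation of a bounded quantity computable from two oracle queries: setting $f_\alpha(z)=\Ex_y[f(y,z)\chi_\alpha(y)]$, orthonormality of the parities gives $\widehat{f_\alpha}(\beta)=\hat{f}(\alpha\beta)$, hence by Parseval
\[
  W_\alpha \;=\; \Ex_z\big[f_\alpha(z)^2\big] \;=\; \Ex_{z,y,y'}\big[ f(y,z)\, f(y',z)\, \chi_\alpha(y)\, \chi_\alpha(y')\big],
\]
with $z$, $y$, $y'$ independent and uniform.

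First I would isolate the estimation subroutine: to estimate $W_\alpha$ to additive accuracy $\tau$ with failure probability $\delta'$, draw $m=O(\tau^{-2}\log(1/\delta'))$ independent triples $(z,y,y')$, query $f(y,z)$ and $f(y',z)$, and average the $[-1,1]$-valued random variable $f(y,z)f(y',z)\chi_\alpha(y)\chi_\alpha(y')$; Hoeffding's inequality gives the guarantee. Each sample costs two queries and $O(n)$ arithmetic to evaluate $\chi_\alpha$.

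Next I would run the tree search. Maintain a set $L_j$ of live prefixes, starting with $L_0$ containing the single empty prefix. Given $L_j$, for each $\alpha\in L_j$ estimate $W_{\alpha 0}$ and $W_{\alpha 1}$ to accuracy $\tau=\Theta(\epsilon^2)$ and place a child in $L_{j+1}$ iff its estimate exceeds a threshold $\theta=\Theta(\epsilon^2)$. On the event that all estimates meet their target accuracy, two invariants hold: (a) every live prefix has \emph{true} weight $\ge\theta-\tau=\Omega(\epsilon^2)$, so by $W_{\mathrm{root}}\le 1$ and the refinement identity, $|L_j|\le 1/(\theta-\tau)$, which I calibrate so that $|L_j|\le 4/\epsilon^2$ at every level; (b) if $|\hat{f}(a)|>\epsilon$, then $W_{a_{\le j}}\ge\hat{f}(a)^2>\epsilon^2$ for every $j$, so the relevant child's estimate exceeds $\epsilon^2-\tau\ge\theta$ and the prefix of $a$ never dies, i.e.\ $a\in L_n$. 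After level $n$, $L_n$ is a set of $\le 4/\epsilon^2$ full strings; for each $a\in L_n$ estimate $\hat{f}(a)=\Ex_x[f(x)\chi_a(x)]$ directly to accuracy $\epsilon$ and set $\tilde{f}(a)$ to that value, with $\tilde{f}(a)=0$ for $a\notin L_n$. Then $\norm{\tilde{f}}_0\le 4/\epsilon^2$; for $a\in L_n$ we have $|\hat{f}(a)-\tilde{f}(a)|\le\epsilon$; and for $a\notin L_n$, invariant (b) gives $|\hat{f}(a)|\le\epsilon=|\hat{f}(a)-\tilde{f}(a)|$. Hence $\norm{\hat{f}-\tilde{f}}_\infty\le\epsilon$.

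For the resource count there are $\le n$ levels, $\le 4/\epsilon^2$ prefixes per level, two weight-estimations per prefix, and $\le 4/\epsilon^2$ final coefficient estimations, so $O(n/\epsilon^2)$ calls to the estimator in total. Taking per-call failure probability $\delta'=\delta/O(n/\epsilon^2)$ and a union bound yields overall success probability $\ge 1-\delta$ with $\log(1/\delta')=O(\log(n/(\epsilon\delta)))=\tilde{O}(\log(1/\delta))$. Each weight-estimation (accuracy $\Theta(\epsilon^2)$) uses $O(\epsilon^{-4}\log(1/\delta'))$ queries, which dominates; so the query count is $O(n/\epsilon^2)\cdot O(\epsilon^{-4}\log(1/\delta'))=\tilde{O}(n\log(1/\delta)/\epsilon^6)$, and multiplying by the $O(n)$ per-sample work gives running time $\tilde{O}(n^2\log(1/\delta)/\epsilon^6)$. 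The one genuinely delicate point is the joint calibration of $\theta$ and $\tau$: they must be large enough that Parseval keeps the survivor count $\le 4/\epsilon^2$, yet small enough relative to $\epsilon^2$ that no $\epsilon$-heavy coefficient is ever pruned (e.g.\ $\tau=\epsilon^2/8$, $\theta=\epsilon^2/2$), and one must verify these constants propagate correctly through all $n$ levels; the rest is routine Hoeffding-and-union-bound bookkeeping.
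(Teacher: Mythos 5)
The paper states Theorem~\ref{theo:KM} without proof, as a citation of Kushilevitz and Mansour (1993), so there is no in-paper argument to compare against. Your reconstruction is the standard Kushilevitz--Mansour / Goldreich--Levin proof in essentially its canonical form: the prefix-bucket weights $W_\alpha$, the identity $W_\alpha = W_{\alpha 0} + W_{\alpha 1}$, the two-query unbiased estimator $f(y,z)f(y',z)\chi_\alpha(y)\chi_\alpha(y')$, the level-by-level pruning with Parseval bounding the survivor count, and the final coefficient estimation all appear in the cited reference, and your calibration $\tau=\epsilon^2/8$, $\theta=\epsilon^2/2$ does make both invariants and the claimed $4/\epsilon^2$ sparsity bound work out. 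The resource accounting also matches: $O(n/\epsilon^2)$ estimator calls at accuracy $\Theta(\epsilon^2)$ each gives $\tilde{O}(n\log(1/\delta)/\epsilon^6)$ queries, and the extra factor $n$ in the time bound comes from evaluating $\chi_\alpha$ per sample. This is a correct and complete proof of the theorem as stated.
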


\subsection{$\mu$--biased Fourier analysis}
\noindent
A \textit{product distribution} $\mathcal{D}_\mu$ over $\{-1,1\}^n$ is characterised by a real vector $\mu \in (-1,1)^n$. Such a distribution $\mathcal{D}$ assigns values to each variable independently, so for $x \in \{-1,1\}^n$ we have $\mathcal{D}_\mu(x) = \prod _{i: x_i = 1} (1+\mu_i)/2  \prod _{i: x_i = -1}  (1-\mu_i)/2$ and $\Ex_\mu[x_i] = \mu_i$. Notice that for $\mu = 0$ one recovers the uniform distribution.
We say that the distribution $\mathcal{D}_\mu$ is \textit{$c$-bounded}, or \textit{constant-bounded}, if $\mu \in [-1+c,1-c]^n$, where $c\in(0,1]$. 

\citet{bahadur1961representation} and \citet{furst1991improved} showed that the
Fourier transform can be extended to product distributions, thus defining the
$\mu$\textit{-biased Fourier transform}. The book by \citet{o2014analysis}
gives a brief introduction to $\mu$--biased Fourier analysis and its
applications. For an inner product $\langle f, g \rangle _\mu =
\Ex_{\mathcal{\mu}} [f(x)g(x)]$, the set of functions $\{ \phi_{\mu,a}
\, | \, a\in \{0,1\}^n\}$, where $\phi_{\mu,a} (x) = \prod_{i:a_i = 1} (x_i -
\mu_i)/\sqrt{1-\mu_i ^2}$ forms an orthonormal basis for the vector space of
real--valued functions on $\{-1, 1\}^n$. In this way every function
$f:\{-1,1\}^n\rightarrow \reals$ can be represented as $f(x) = \sum_{a\in
\{0,1\}^n} \hat{f}_\mu (a) \phi_{\mu,a}(x)$, where $\hat{f}_\mu
(a)=\Ex_\mu[f(x)\phi_{\mu,a}(x)]$. For vectors of $\mu$--biased Fourier
coefficients we extend the same notation introduced for standard Fourier
coefficients. Parseval's identity extends to product distributions
$\Ex_\mu [f^2] = \sum_a \hat{f}_\mu (a) ^2 = \norm{\hat{f}_\mu}_2 ^2$.
This implies that for any $f:\{-1,1\}^n \rightarrow [-1,1]$, $\sum_a
\hat{f}_\mu (a) ^2  \leq 1$.

The KM algorithm has been extended to product distributions by \citet{bellare1991spectral}, \citet{jackson1997efficient} and \citet{kalai2009learning}. We give the version presented in~\cite{kalai2009learning}.
\begin{theorem}[EKM algorithm]
\label{theo:EKM}
Let $f:\{-1,1\}^n \rightarrow [-1,1]$ be a real--valued function and let $\epsilon>0$, $\delta>0$, $\mu \in(-1,1)^n$. Then, there exists an algorithm with oracle access to $f$ that, with probability at least $1-\delta$, returns a succinctly represented vector $\tilde{f}_\mu$ such that $\norm{\hat{f}_\mu - \tilde{f}_\mu}_\infty \leq \epsilon$ and $\norm{\tilde{f}_\mu}_0 \leq 4/\epsilon^2$. The algorithm runs in time polynomial in $n$,$1/\epsilon$ and $\mathrm{log}(1/\delta)$.
\end{theorem}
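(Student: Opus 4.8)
The plan is to transport the Kushilevitz--Mansour prefix search that underlies Theorem~\ref{theo:KM} to the $\mu$--biased basis; the one genuinely new ingredient is that the functions $\phi_{\mu,a}$ are not $\{-1,1\}$--valued, so the estimators must be controlled through $\mu$--norm identities rather than pointwise bounds. For a prefix $\alpha\in\{0,1\}^k$ specifying the first $k$ coordinates of a Fourier index, define the weight below it, $W_\alpha^\mu:=\sum_{\beta\in\{0,1\}^{n-k}}\hat{f}_\mu(\alpha\beta)^2$. Writing a point as $x=(y,z)$ with $y$ its first $k$ coordinates and $z$ the remaining $n-k$, the factorisation $\phi_{\mu,\alpha\beta}(x)=\phi_{\mu,\alpha}(y)\,\phi_{\mu,\beta}(z)$ together with the independence of coordinates under $\mathcal{D}_\mu$ gives $\hat{f}_\mu(\alpha\beta)=\Ex_z[\phi_{\mu,\beta}(z)\,g_\alpha(z)]$ for $g_\alpha(z):=\Ex_y[f(y,z)\,\phi_{\mu,\alpha}(y)]$; that is, $\hat{f}_\mu(\alpha\beta)$ is the $\mu$--biased Fourier coefficient of $g_\alpha$ at $\beta$, so $\mu$--biased Parseval applied to $g_\alpha$ gives the key identity $W_\alpha^\mu=\Ex_z[g_\alpha(z)^2]$. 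The first step is to record these facts, along with $W_\emptyset^\mu=\Ex_\mu[f^2]\le 1$ and the recursion $W_\alpha^\mu=W_{\alpha 0}^\mu+W_{\alpha 1}^\mu$.

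The second step is to estimate every $W_\alpha^\mu$ from the oracle. Since $\mu$ is given, one may draw $y,y'$ independently from the first-$k$ marginal and $z$ from the remaining marginal, query $f(y,z)$ and $f(y',z)$, and form $f(y,z)\,f(y',z)\,\phi_{\mu,\alpha}(y)\,\phi_{\mu,\alpha}(y')$; by the identity above this has expectation $W_\alpha^\mu$, and its second moment is at most $\Ex_y[\phi_{\mu,\alpha}(y)^2]\,\Ex_{y'}[\phi_{\mu,\alpha}(y')^2]=1$ by orthonormality --- with no dependence on how close $\mu$ lies to $\pm1$. Similarly $f(x)\,\phi_{\mu,a}(x)$ estimates $\hat{f}_\mu(a)$ with second moment at most $1$. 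Hence Chebyshev, amplified by a median-of-means step, estimates any $W_\alpha^\mu$ (or any $\hat{f}_\mu(a)$) to additive accuracy $\eta$ with confidence $1-\delta'$ using $\Ord{\log(1/\delta')/\eta^2}$ queries and comparable time.

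With these pieces the algorithm is the usual breadth-first search over prefixes: starting from $\emptyset$, repeatedly estimate $W_{\alpha 0}^\mu$ and $W_{\alpha 1}^\mu$ to accuracy $\tfrac14\epsilon^2$ and keep a child iff its estimate is at least $\tfrac34\epsilon^2$. This retains every prefix of true weight $\ge\epsilon^2$, in particular every prefix of an $\epsilon$--heavy coefficient, and discards every prefix of true weight $<\tfrac12\epsilon^2$; by $\mu$--biased Parseval the prefixes kept at any level are disjoint with weight $\ge\tfrac12\epsilon^2$ each, so at most $2/\epsilon^2$ survive per level and the whole tree has $\Ord{n/\epsilon^2}$ nodes. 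At depth $n$ each survivor is a full vector $a$; estimate $\hat{f}_\mu(a)$ directly to accuracy $\epsilon$, output the resulting pairs as $\tilde{f}_\mu$, and leave every other coordinate $0$. A non-output $a$ has some pruned prefix and hence $\hat{f}_\mu(a)^2<\epsilon^2$, so $\norm{\hat{f}_\mu-\tilde{f}_\mu}_\infty\le\epsilon$; the depth-$n$ survivors, each of weight $\ge\tfrac12\epsilon^2$, number at most $2/\epsilon^2\le 4/\epsilon^2$, so $\norm{\tilde{f}_\mu}_0\le 4/\epsilon^2$. A union bound over the $\Ord{n/\epsilon^2}$ estimations adds only a $\log(n/(\epsilon\delta))$ factor, leaving the total running time $\mathrm{poly}(n,1/\epsilon,\log(1/\delta))$.

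The step at which the uniform-distribution proof does not transfer verbatim, and the main obstacle, is controlling the estimators built from the unbounded functions $\phi_{\mu,a}$: a naive pointwise bound on $|\phi_{\mu,a}|$ degrades exponentially in $\norm{a}_0$ --- by as much as $(2/c)^{\norm{a}_0/2}$ for a $c$--bounded $\mu$, and worse as $\mu\to\pm1$ --- which would ruin the sample bound. The resolution is to work exclusively through the orthonormality identity $\Ex_\mu[\phi_{\mu,a}^2]=1$ (and its product-form consequence for the two independent copies above), which holds with no dependence on $c$ and keeps every second moment that appears $\Ord{1}$, so that Chebyshev-type concentration suffices throughout.
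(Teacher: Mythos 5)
The paper does not prove Theorem~\ref{theo:EKM}; it is quoted as a known result, with the authors explicitly deferring to Kalai, Samordinsky, and Teng (2009), as well as Bellare (1991) and Jackson (1997). There is therefore no ``paper's own proof'' to compare against. That said, your argument is a correct self-contained proof and, to the best of my knowledge, follows the same general strategy as the cited works: run the Kushilevitz--Mansour prefix-tree search in the $\mu$-biased basis, with prefix weights $W_\alpha^\mu$ estimated via the correlated two-copy estimator $f(y,z)f(y',z)\phi_{\mu,\alpha}(y)\phi_{\mu,\alpha}(y')$.

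Your identification of the obstacle and its resolution is the right one. The basis functions $\phi_{\mu,a}$ are unbounded (with pointwise magnitude as large as $\prod_{i:a_i=1}2/\sqrt{1-\mu_i^2}$), so the naive route of bounding the estimator in $\ell_\infty$ and invoking Hoeffding fails outside the $c$-bounded regime and even there degrades exponentially in $\|a\|_0$. Passing through the second moment---using $\Ex_\mu[\phi_{\mu,\alpha}^2]=1$ together with independence of $y,y',z$ to get $\Ex[Z^2]\le 1$---and then applying Chebyshev with median-of-means amplification is exactly what keeps the sample cost at $O(\log(1/\delta')/\eta^2)$ with no dependence on how close $\mu$ is to $\pm1$, which is consistent with the theorem's statement ($\mu\in(-1,1)^n$ with no $c$-bounded hypothesis).

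The bookkeeping is also sound: with prefix estimates to accuracy $\tfrac14\epsilon^2$ and threshold $\tfrac34\epsilon^2$, all prefixes of weight $\ge\epsilon^2$ survive and all survivors have true weight $\ge\tfrac12\epsilon^2$, so by Parseval each level has at most $2/\epsilon^2$ nodes; pruned prefixes have weight $<\epsilon^2$ so any $a$ extending one satisfies $|\hat{f}_\mu(a)|<\epsilon$; the depth-$n$ survivors are estimated directly to accuracy $\epsilon$; and a union bound over the $O(n/\epsilon^2)$ estimations only adds logarithmic factors. All of this gives $\|\hat{f}_\mu-\tilde{f}_\mu\|_\infty\le\epsilon$, $\|\tilde{f}_\mu\|_0\le 2/\epsilon^2\le 4/\epsilon^2$, and $\mathrm{poly}(n,1/\epsilon,\log(1/\delta))$ time, as required. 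The only thing worth saying explicitly, though it is implicit in your setup, is that the sampler for the marginals of $\mathcal{D}_\mu$ is trivially available since $\mu$ is given to the algorithm, so the two-copy estimator is realizable with two membership queries per sample.
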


\subsection{Quantum computation and quantum Fourier transform}
\noindent
A generic $n$-qubit state is a complex vector, also known as the \textit{state vector}, acting on a Hilbert space of dimension $2^n$ equipped with an Hermitian scalar product $\braket{\cdot|\cdot}$. We use the Dirac notation to denote quantum states and write $\ket{\psi}$ to denote the quantum state $\psi$. Given a basis $\{ \ket{b_i}\}_{i\in [2^n]}$ the elements of $\ket{\psi}$ correspond to its projections over the basis elements.  Each element of $\ket{\psi}$ corresponds to a different measurable outcome. 
The probability of measurement outcome $i$ is $p(i) = |\psi_i|^2$, where $\psi_i = \braket{\psi|b_i} \in \mathbb{C}$ is the projection of $\ket{\psi}$ onto $\ket{b_i}$. Let $\ket{\psi}$ and $\ket{\phi}$ be two quantum states, their joint description $\ket{\tau}$ is given by the tensor product of the respective state vectors $\ket{\tau}=\ket{\psi} \otimes \ket{\phi}$. 

The evolution of quantum states is governed by \textit{quantum operators}. Quantum operators acting on a $n$-qubit states are $2^n$ dimensional unitary matrices and are denoted with capital letters. Let $x\in \{0,1\}^n$, the QFT over $\mathbb{Z}_2 ^n$ is defined as $H^{\otimes n} \ket{x} = 2^{-n/2} \sum _{a\in \{0,1\}^n} (-1)^{x\cdot a}\ket{a}$, where $H$ is the Hadamard transform $H = \frac{1}{\sqrt{2}} \bigl[ \begin{smallmatrix}1 & 1\\ 1 & -1\end{smallmatrix} \bigl]$.

We often work in the \textit{computational basis} $\{\ket{i}\}$ where, for an $n$-qubit system, each basis element corresponds to an $n$-bit string. A single qubit system can take two values $\ket{0}$ and $\ket{1}$. When working on the Boolean hypercube $\{-1,1\}^n$ we take $0 \equiv -1$ and $1 \equiv 1$. A \textit{quantum register} is a collection of qubits. Given a Boolean--valued function $f:\{-1,1\}^n \rightarrow \{-1,1\}$ a \textit{quantum membership oracle} $O_f$ is a unitary map that applied on $n+1$ qubits acts as follow: $O_f:\ket{i}\ket{0} \rightarrow \ket{i}\ket{f(i)}$. By combining a membership oracle with the Hadamard transform it is possible to make a phase query $\ket{i}\ket{-} \rightarrow (-1)^{f(i)}\ket{i}\ket{-}$, where $\ket{-} = \frac{1}{\sqrt{2}}(\ket{0} - \ket{1})$. This operation is also known as \textit{phase kickback}. For ease of notation, in the following we will not write explicitly the ancilla register $\ket{-}$.

Given a (continuous) probability distribution whose density is efficiently integrable there exists an efficient technique developed by \citet{grover2002creating} to generate a quantum superposition which approximate the distribution.
\begin{lemma}
\label{lemma:terry}
Let $\mathcal{D}$ be a continuous probability distribution and let $\mathcal{D}^*$ be a discretisation of $\mathcal{D}$ over $\{0,1\}^n$. If there exists an efficient classical algorithm to compute $\int _a ^b \mathcal{D}(x) dx$ for every $a,b \in \mathbb{R}$ then, there exists an efficient quantum algorithm that returns the quantum state 
$$
\ket{\psi} = \sum_i \sqrt{\mathcal{D}^*(i)}\ket{i}.
$$
\end{lemma}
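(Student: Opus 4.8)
The plan is to use the qubit-by-qubit state-preparation method of Grover and Rudolph~\cite{grover2002creating}. After truncating $\mathcal{D}$ to a bounded interval carrying all but a negligible fraction of its mass and applying an affine rescaling, we may assume its support lies in $[0,1)$ and that $\mathcal{D}^*$ assigns to $i\in\{0,1\}^n$, read as an integer $0\le i<2^n$, the mass $\mathcal{D}^*(i)=\int_{i/2^n}^{(i+1)/2^n}\mathcal{D}(x)\,dx$. The state is then built by induction on $k=0,1,\dots,n$, maintaining the invariant that after stage $k$ the first $k$ qubits hold
\[
\ket{\psi_k}=\sum_{l\in\{0,1\}^k}\sqrt{p^{(k)}_l}\,\ket{l},\qquad p^{(k)}_l=\int_{l/2^k}^{(l+1)/2^k}\mathcal{D}(x)\,dx ,
\]
starting from the trivial zero-qubit state with $p^{(0)}=1$. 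The final state $\ket{\psi_n}$ is exactly the desired $\ket{\psi}$.

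For the inductive step, append a fresh qubit $\ket{0}$ to $\ket{\psi_k}$ and, for each $l\in\{0,1\}^k$, let $f(l)=p^{(k+1)}_{2l}/p^{(k)}_l$ be the conditional probability that a sample lies in the left half of the $l$-th dyadic interval of length $2^{-k}$, with the convention $f(l)=0$ whenever $p^{(k)}_l=0$ (such a branch carries zero amplitude). Both $p^{(k+1)}_{2l}$ and $p^{(k)}_l$ are integrals of $\mathcal{D}$ between explicit endpoints, so by hypothesis $f(l)$ --- and hence $\theta_l=\arccos\sqrt{f(l)}$ --- is computable to any requested precision by an efficient classical algorithm. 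Implementing that algorithm reversibly, we write $\theta_l$ into a work register, apply to the fresh qubit the single-qubit rotation $R_y(2\theta_l)$ controlled on this register, obtaining $\ket{l}\bigl(\sqrt{f(l)}\,\ket{0}+\sqrt{1-f(l)}\,\ket{1}\bigr)$, and uncompute the work register. Since $p^{(k)}_l f(l)=p^{(k+1)}_{2l}$ and $p^{(k)}_l\bigl(1-f(l)\bigr)=p^{(k+1)}_{2l+1}$, the new $(k+1)$-qubit register is in state $\ket{\psi_{k+1}}$, closing the induction.

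The total cost is $n$ iterations, each performing one reversible evaluation of the integration subroutine --- polynomial in $n$ and in the bit-precision by assumption --- plus one controlled rotation and its uncomputation, hence polynomial overall. The only delicate point is that the integrals and the $\arccos$ are evaluated only to finite precision $2^{-b}$, so each controlled rotation deviates from the ideal one by $O(2^{-b})$ in operator norm; these errors accumulate at most additively over the $n$ stages, so $b=O(\log(n/\epsilon))$ bits suffice to make the output $\epsilon$-close to $\ket{\psi}$ in $\ell_2$ distance, which is why the lemma only claims an approximation to the distribution. I expect this error bookkeeping, together with ensuring the classical integrator is realised reversibly so that no garbage remains entangled with the output register, to be the main obstacle; the amplitude invariant itself follows immediately from additivity of $\int\mathcal{D}$ across the dyadic refinement.
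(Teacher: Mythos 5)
Your proposal is a correct and complete rendering of the Grover--Rudolph qubit-by-qubit construction, which is exactly what the paper relies on: the paper states the lemma with a citation to \citet{grover2002creating} and gives no proof of its own. Your treatment of finite-precision effects --- noting that the integrals and $\arccos$ need only $O(\log(n/\epsilon))$ bits, that the error accumulates additively, and that this is why the output is only an approximation --- is also consistent with the way the paper frames the result (``a quantum superposition which approximate[s] the distribution'').
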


\subsection{PAC learning and quantum PAC learning}
\noindent
A \textit{concept class} $C$ is a set of Boolean functions. Every function $f\in C$ is a \textit{concept}. In the PAC model developed by \citet{valiant1984theory} a learner tries to approximate with high probability an unknown concept $f$ from a training set of $m$ random labelled examples $\{(x,f(x))\}$. The examples are given by an \textit{example oracle} $\mathrm{EX}(f,\mathcal{D})$ that returns an example $(x,f(x))$, where $x$ is randomly sampled from a probability distribution $\mathcal{D}$ over $\{-1,1\}^n$. A learning algorithm $A$ for $C$ takes as input an accuracy parameter $ \epsilon \in (0,1)$, a confidence parameter $\delta \in (0,1)$ and the training set and outputs a hypothesis $h$ that is a good approximation of $f$ with probability $1 - \epsilon$. We say that a concept class $C$ is \textit{PAC-learnable} if, for every $\mathcal{D}$, $f$, $h$, $\delta$, when running a learning algorithm $L$ on $m \geq m_{\mathcal{C}}$ examples generated by $\mathcal{D}$, we have that, with probability at least $1-\delta$, $\Pr_{x\sim \mathcal{D}}[h(x)\neq f(x)]\leq\epsilon$.
PAC theory introduces two parameters to classify the efficiency of a learner. The first one, $m_{C}$, is information-theoretic and determines the minimum number of examples required to PAC-learn the class $C$.  We refer to $m_{C}$ as the \textit{sample complexity} of the concept class $\mathcal{C}$. The second parameter, the \textit{time complexity}, is computational and corresponds to the runtime of the best learner for the class $C$. We say that a concept class is \textit{efficiently} PAC-learnable if the running time of $L$ is polynomial in $n$, $1/\epsilon$ and $1/\delta$.%

Two extensions of the PAC model are relevant for our purposes. In the MQ model the learner has access, in addition to the example oracle $\mathrm{EX}(f,\mathcal{D})$, to a membership oracle $\mathrm{MQ}(f)$ that on input $x$ returns $f(x)$.
In the quantum PAC model, the examples are given by a \textit{quantum example oracle} $\mathrm{QEX}(f,\mathcal{D})$ that returns the superposition $\sum_{x} \sqrt{\mathcal{D}(x)} \ket{x, f(x)}$. It has been proven \citep{bshouty1998learning} that membership queries are strictly more powerful than a quantum example oracle (\textit{i.e.} a quantum example oracle can be simulated by a membership oracle but the converse is not true). When $\mathcal{D}$ is the product distribution we use $\mathrm{EX}(f,\mu)$ and $\mathrm{QEX}(f,\mu)$.

A \textit{DNF formula} $f : \{-1,1 \}^n \rightarrow \{-1,1\}$ is a disjunction
of terms where each term is a conjunction of Boolean literals and a literal is
either a variable or its negation (\textit{e.g.}, $ f(x) =(x_2 \wedge x_3
\wedge \neg x_1) \vee (\neg x_4 \wedge x_1)$). The \textit{size of a DNF} $s$
is the number of terms 

\section{Overview of Feldman's algorithm}
\label{sec:feldman}
\noindent
Our proof of the learnability of DNFs under the product distribution builds on an algorithm 
by \citet{feldman2012learning} that greatly simplified the learnability of DNFs. At the core of Feldman's algorithm lies a result by \citet{kalai2009learning} that shows that DNFs can be approximated by heavy low-degree Fourier coefficients alone. More formally, they proved that, for any $s$-term DNF $f$ and for every function $g: \{-1,1 \}^n \rightarrow [-1,1]$, the distance between $f$ and $g$ (measured as $\Ex_\mu [|f(x) - g(x)|]$) is $\Ex_\mu [|f(x) - g(x)|] \leq (2s +1) \cdot \norm{\hat{f}-\hat{g}}_\infty$. This fact gives a direct learnability condition and avoids an involved boosting procedure to turn a weak learner into a strong one (as in the harmonic sieve algorithm by \citet{jackson1997efficient}). Feldman further refined this fact about DNFs.
\begin{theorem}[Theorem 3.8 in~\citep{feldman2012learning}]
\label{theo:threeight}
Let $c\in(0,1]$ be a constant, $\mu$ be a $c$-bounded distribution and $\epsilon>0$. For an integer $s>0$ let $f$ be an $s$-term DNF. For $d=[\log (s/\epsilon) / \log(2/(2-c))]$ and every bounded function $g: \{-1,1\}^n \rightarrow [-1,1]$,
$$
\Ex_\mu [|f(x) - g(x)|] \leq (2\cdot (2-c)^{d/2} \cdot s + 1) \cdot \Norm{ \hat{f}_\mu (B_d) - \hat{g}_\mu(B_d)}_\infty +4\epsilon. 
$$
\end{theorem}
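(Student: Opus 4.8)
The plan is to turn $\Ex_\mu[|f-g|]$ into a sum over the terms of $f$ by a union bound, arranged so that the overcounting it introduces only helps us, and then to use that each term of $f$ has small $\mu$--biased Fourier degree unless it is satisfied with tiny probability. First I would rewrite the objective: since $f$ is $\{-1,1\}$--valued and $g$ is $[-1,1]$--valued, checking the two cases $f(x)=\pm 1$ gives the pointwise identity $|f(x)-g(x)|=f(x)(f(x)-g(x))$, so $\Ex_\mu[|f-g|]=\langle f,\,f-g\rangle_\mu$; and writing $f=2\cdot\mathbf{1}_f-1$, where $\mathbf{1}_f$ is the $0/1$ indicator of ``$f$ is true'', this equals $2\langle\mathbf{1}_f,\,f-g\rangle_\mu-(\hat f_\mu(0)-\hat g_\mu(0))$, with $|\hat f_\mu(0)-\hat g_\mu(0)|\le\Norm{\hat f_\mu(B_d)-\hat g_\mu(B_d)}_\infty$ because $0\in B_d$.

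Next I would apply the union bound $\mathbf{1}_f\le\sum_{j=1}^{s}\mathbf{1}_{T_j}$, where $\mathbf{1}_{T_j}$ is the $0/1$ indicator that the $j$-th term is satisfied, and set $E:=\sum_j\mathbf{1}_{T_j}-\mathbf{1}_f\ge 0$. The key point is that $E$ is supported on $\{f=1\}$ (if $f(x)=-1$ then no term fires, so $E(x)=0$), where $f-g=1-g\ge 0$; hence $\langle E,\,f-g\rangle_\mu=\Ex_\mu[E(1-g)]\ge 0$, so $\langle\mathbf{1}_f,f-g\rangle_\mu\le\sum_j\langle\mathbf{1}_{T_j},f-g\rangle_\mu$ and
$$
\Ex_\mu[|f-g|]\;\le\;2\sum_{j=1}^{s}\langle\mathbf{1}_{T_j},\,f-g\rangle_\mu\;+\;\Norm{\hat f_\mu(B_d)-\hat g_\mu(B_d)}_\infty .
$$
Moreover $\mathbf{1}_{T_j}(x)=1$ forces $f(x)=1$, so $\mathbf{1}_{T_j}\cdot(f-g)=\mathbf{1}_{T_j}\cdot(1-g)$ pointwise, and therefore $\langle\mathbf{1}_{T_j},f-g\rangle_\mu=\Ex_\mu[\mathbf{1}_{T_j}(1-g)]\in[0,\,2p_j]$, where $p_j:=\Pr_\mu[T_j\text{ satisfied}]$.

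Now I would split the terms by width. For a term on the variable set $S_j$, expanding $\mathbf{1}_{T_j}$ in the $\mu$--biased basis (using $\mathbf{1}[x_i=b_{ji}]=\alpha_{ji}+\tfrac12 b_{ji}\sqrt{1-\mu_i^2}\,\phi_{\mu,\{i\}}$ with $\alpha_{ji}=\Pr_\mu[x_i=b_{ji}]$) shows that the spectrum of $\mathbf{1}_{T_j}$ is supported on $\{a:a\subseteq S_j\}$ and that $\Norm{\widehat{\mathbf{1}_{T_j}}_\mu}_1=\prod_{i\in S_j}(\alpha_{ji}+\sqrt{\alpha_{ji}(1-\alpha_{ji})})$. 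Using $\sqrt{\alpha}+\sqrt{1-\alpha}\le\sqrt 2$ together with $c$--boundedness ($\alpha_{ji}\le 1-c/2$), this is at most $\prod_{i\in S_j}\sqrt{2\alpha_{ji}}\le (2-c)^{|S_j|/2}$, while $p_j=\prod_{i\in S_j}\alpha_{ji}\le((2-c)/2)^{|S_j|}$. For a short term ($|S_j|\le d$) all of its Fourier coefficients lie in $B_d$, so $\langle\mathbf{1}_{T_j},f-g\rangle_\mu\le (2-c)^{d/2}\Norm{\hat f_\mu(B_d)-\hat g_\mu(B_d)}_\infty$. For a long term ($|S_j|>d$), $\langle\mathbf{1}_{T_j},f-g\rangle_\mu\le 2p_j$, and summing over the at most $s$ long terms gives $\sum_{\text{long }j}p_j<s((2-c)/2)^d\le\epsilon$, because $d=[\log(s/\epsilon)/\log(2/(2-c))]$ guarantees $(2/(2-c))^d\ge s/\epsilon$. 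Substituting both estimates into the displayed inequality (there are at most $s$ terms of each type) yields $\Ex_\mu[|f-g|]\le(2s(2-c)^{d/2}+1)\Norm{\hat f_\mu(B_d)-\hat g_\mu(B_d)}_\infty+4\epsilon$, which is the claim.

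The step I expect to be the crux is the second one: using the plain union bound over the terms rather than the exact ``first satisfied term'' decomposition $\mathbf{1}_f=\sum_j\mathbf{1}_{T_j}\prod_{j'<j}(1-\mathbf{1}_{T_{j'}})$, and observing that the overcounting error $\langle E,f-g\rangle_\mu$ it produces has the right sign. The first--satisfied decomposition keeps the identity exact but each summand depends on the variables of all the terms, hence has large Fourier degree and large Fourier $\ell_1$ mass, and the argument collapses; the union bound is lossy, but the loss is non-negative and each summand $\mathbf{1}_{T_j}$ retains exactly the low degree and the $\ell_1$ control we need. Everything else --- the pointwise identity for $|f-g|$, the elementary inequality $\sqrt{\alpha}+\sqrt{1-\alpha}\le\sqrt 2$, and the arithmetic fixing the constant $4\epsilon$ --- is routine.
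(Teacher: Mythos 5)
Your proof is correct, and it follows the same route as Feldman's argument (which this paper only cites, without reproducing a proof). The chain of ideas --- the pointwise identity $|f-g| = f\cdot(f-g)$, writing $f = 2\mathbf{1}_f - 1$, replacing $\mathbf{1}_f$ by the union bound $\sum_j \mathbf{1}_{T_j}$ and observing the overcount $E$ is supported where $f-g\ge 0$ so the error has the right sign, expanding each term's indicator in the product form $\prod_{i\in S_j}\bigl(\alpha_{ji} + b_{ji}\sqrt{\alpha_{ji}(1-\alpha_{ji})}\,\phi_{\mu,\{i\}}\bigr)$ to get the $\ell_1$ Fourier bound $(2-c)^{|S_j|/2}$ via $\sqrt{\alpha}+\sqrt{1-\alpha}\le\sqrt{2}$ and $c$-boundedness, applying H\"older against $\Norm{\hat f_\mu(B_d)-\hat g_\mu(B_d)}_\infty$ for short terms, and discarding long terms by $p_j\le((2-c)/2)^{|S_j|}$ together with the choice of $d$ --- is exactly the structure of the Kalai--Samorodnitsky--Teng / Feldman proof. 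The final arithmetic ($2s(2-c)^{d/2} + 1$ and $4\epsilon$) checks out, and you have correctly identified the union bound (rather than the exact ``first satisfied term'' decomposition) as the step where the sign of the slack matters.
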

By this theorem the learnability of DNF reduces to constructing a $g$ that approximates the heavy low-degree Fourier spectrum of $f$. This is exactly the approach followed by Feldman that we now proceed to sketch.

The first step of the procedure is to run the EKM algorithm to estimate the heavy Fourier spectrum of $f$. The EKM algorithm returns a succinct representation of $\hat{f}$ and the learner selects only the coefficients that have degree $\leq d$. This is the only step of the algorithm that requires membership queries and is the subroutine that will be replaced by the quantum EKM algorithm that will be derived in Section~\ref{sec:muspectrum}. 

Once the learner has estimated the Fourier spectrum of $f$, it proceeds with the construction of $g$. The procedure is simple and based on an iterative process. Note that by Parseval 
\begin{equation}
\label{eq:exParseval}
\Ex_\mu[(f-g)^2] = \sum_b (\hat{f}_\mu(b)-\hat{g}_\mu(b))^2 = \norm{\hat{f}_\mu-\hat{g}_\mu}^2 _2.
\end{equation}
Suppose there exists and $a$ such that $|\hat{f}_\mu(a) - \hat{g}_\mu(a)| \geq \gamma $. It is possible to construct a $g'$ such that $g'$ is closer than $g$ to $f$ in $l_2$ norm with the following rule:
$$
g' = g + (\hat{f}_\mu(a) - \hat{g}_\mu(a)) \phi_{\mu,a}.
$$
Then by Eq.~\ref{eq:exParseval} we have that
\begin{align*}
\Ex_\mu[(f-g')^2] &= \sum_{b\neq a} (\hat{f}_\mu(b)-\hat{g}_\mu(b))^2 \\
&= \Ex_\mu[(f-g)^2] -  (\hat{f}_\mu(a)-\hat{g}_\mu(a))^2 \\
&\leq \Ex_\mu[(f-g)^2] - \gamma^2.
\end{align*}
The problem with this procedure is that the function $g'$ might have value outside $[-1,1]$ but Feldman showed that the function can be adjusted to the right range and made closer to $f$ in $\ell_2$ distance by cutting-off all the values outside of $[-1,1]$.

Once a precision has been reached such that an application of
Theorem~\ref{theo:threeight} gives $\Ex_\mu [|f(x) - g(x)|] \leq
\epsilon$, the algorithm outputs $\mathrm{sign}(g)$ as hypothesis. From this,
we get the following in regards to learning $f$,
$$
\Pr_\mu [f \neq \mathrm{sign}(g)] \leq \Ex_\mu [|f-g|] \leq \epsilon. 
$$
The running time of all the above operations is polynomial in $n$ and inverse polynomial in the error parameters resulting in the following corollary
\begin{corollary}[Corollary~5.1 in~\citep{feldman2012learning}]
Let $f$ compute an $s$-term DNF. Let $c\in(0,1]$ be a constant and let $\mathcal{D}_\mu$ be a $c$-bounded probability distribution. Let $\mathrm{EX}(f,\mu)$ be an example oracle and $\mathrm{MQ}(f)$ a membership oracle.  Then, there exists an algorithm with $\mathrm{EX}(f,\mu)$ and $\mathrm{MQ}(f)$ access that efficiently PAC learns $f$ over $\mathcal{D}_\mu$.
\end{corollary}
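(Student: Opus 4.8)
The plan is to run Feldman's algorithm with the EKM algorithm (Theorem~\ref{theo:EKM}) as the Fourier-approximation subroutine --- this is the only step that uses the membership oracle $\mathrm{MQ}(f)$, while the example oracle $\mathrm{EX}(f,\mu)$ is used only to form an empirical estimate $\hat\mu_i$ of each bias $\mu_i=\Ex_\mu[x_i]$, which is all EKM needs (the error from using $\hat\mu$ in place of $\mu$ is controlled in Section~\ref{app:error}). Fix $\epsilon$ and $\delta$. First I would set the degree cutoff $d=\lceil\log(s/\epsilon)/\log(2/(2-c))\rceil$ as in Theorem~\ref{theo:threeight}; for a constant $c$ the prefactor $2\cdot(2-c)^{d/2}\cdot s+1$ is then bounded by a fixed polynomial $P=P(s,1/\epsilon)$ (this is the quantitative content of Theorem~\ref{theo:threeight}, which I would simply invoke). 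Setting the target accuracy $\gamma=\Theta(\epsilon/P)$, Theorem~\ref{theo:threeight} says that any $g:\{-1,1\}^n\to[-1,1]$ with $\Norm{\hat f_\mu(B_d)-\hat g_\mu(B_d)}_\infty\le\gamma$ satisfies $\Ex_\mu[|f-g|]\le\epsilon$; so it suffices to build such a $g$ in polynomial time and output $h=\mathrm{sign}(g)$, since then $\Pr_\mu[f\neq h]\le\Ex_\mu[|f-g|]\le\epsilon$, exactly as in Section~\ref{sec:feldman}.

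To construct $g$ I would iterate the correction step from Section~\ref{sec:feldman}, using EKM to locate the discrepancies. Start with $g\equiv 0$. At each round we hold an efficiently evaluable function $g$ with values in $[-1,1]$ (stored as the list of corrections applied so far); run EKM on $\tfrac12(f-g)$ --- we have membership access to this function, since we may query $f$ via $\mathrm{MQ}(f)$ and evaluate $g$ --- with accuracy $\alpha$ and confidence $1-\delta/(2T)$, where $T=O(1/\gamma^2)$ is the round bound established below. The output is a sparse vector with $O(1/\gamma^2)$ nonzero entries. If some degree-$\le d$ coordinate $a$ of this output has magnitude at least a stopping threshold $\beta$, set $v$ to twice the reported estimate and update
\[
g\ \longleftarrow\ \mathrm{clip}\!\left(g+v\,\phi_{\mu,a}\right),
\]
where $\mathrm{clip}$ truncates function values into $[-1,1]$; otherwise halt. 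A short computation with Parseval's identity (Eq.~\ref{eq:exParseval}) shows that, provided $\alpha$ is a small enough constant fraction of $\beta=\Theta(\gamma)$, the unclipped update decreases $\Ex_\mu[(f-g)^2]$ by $\Omega(\gamma^2)$: the true discrepancy $\hat f_\mu(a)-\hat g_\mu(a)$ at $a$ has magnitude $\Omega(\gamma)$ by the EKM accuracy guarantee, and $v$ matches it up to $2\alpha$. The clipping step is the projection onto the convex set $\{h:\norm{h}_\infty\le1\}$ in the $\mu$-weighted $L^2$ geometry, and since $f$ lies in this set the projection does not increase $\Ex_\mu[(f-g)^2]$. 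As $\Ex_\mu[(f-g)^2]=1$ initially and is nonnegative, the loop halts after $T=O(1/\gamma^2)$ rounds; by the EKM guarantee again, at termination every $a\in B_d$ has $|\hat f_\mu(a)-\hat g_\mu(a)|<\gamma$, which is what we need.

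For the running time, each round makes one EKM call --- polynomial in $n$, $1/\gamma$ and $\log(1/\delta)$ by Theorem~\ref{theo:EKM} --- plus polynomial overhead for bookkeeping and evaluating $g$; crucially we never enumerate $B_d$ (whose size $n^{\Theta(\log(s/\epsilon))}$ is only quasi-polynomial): EKM does the searching and returns a sparse list. Since $1/\gamma$, $d=O(\log(s/\epsilon))$, $T$, and the sample size needed to estimate $\mu$ are all polynomial in $n$, $s$, $1/\epsilon$ and $\log(1/\delta)$, the total time is polynomial; a union bound over the $T$ EKM calls and the bias-estimation step keeps the overall failure probability below $\delta$. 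I expect the main obstacle to be making the iterative construction rigorous: one must check (i) that the clipping step genuinely does not move $g$ away from $f$ in the $\mu$-weighted $L^2$ norm --- the contraction property of projection onto a convex set containing $f$ --- and (ii) that feeding EKM the \emph{estimated} rather than exact coefficient value still yields a fixed polynomial drop in the potential each round, which forces the accuracy $\alpha$, the threshold $\beta$, and the final guarantee $\gamma$ to be chosen consistently. The two remaining points --- that $2\cdot(2-c)^{d/2}\cdot s+1$ is polynomial for the prescribed $d$, and that replacing $\mu$ by $\hat\mu$ is harmless --- are covered by Theorem~\ref{theo:threeight} and Section~\ref{app:error} respectively.
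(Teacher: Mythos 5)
Your proposal is correct and tracks the same high-level route the paper sketches in Section~\ref{sec:feldman} for Feldman's algorithm: reduce via Theorem~\ref{theo:threeight} to an $\ell_\infty$-approximation of the degree-$\le d$ $\mu$-biased spectrum, build $g$ iteratively with the Parseval potential $\Ex_\mu[(f-g)^2]$ dropping by $\Omega(\gamma^2)$ per round, clip to keep $g$ bounded in $[-1,1]$, output $\mathrm{sign}(g)$, and defer the empirical estimation of $\mu$ to Section~\ref{app:error}. The one place your reconstruction diverges from the paper's sketch is where EKM is invoked: the sketch runs EKM \emph{once} on $f$ at the outset to locate the heavy coefficients and then iterates against those fixed estimates, whereas you re-run EKM on $(f-g)/2$ afresh each round. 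Your variant is arguably the more directly analysable of the two: when EKM on $(f-g)/2$ returns no $\beta$-heavy degree-$\le d$ coordinate, you immediately conclude $\Norm{\hat f_\mu(B_d)-\hat g_\mu(B_d)}_\infty < 2(\beta+\alpha)$, whereas the single up-front EKM call requires an extra (and in the paper's sketch unstated) observation that the clipping steps do not create large low-degree discrepancies at coordinates outside the initially reported heavy set. The price is $T=O(1/\gamma^2)$ EKM calls rather than one, but both routes are polynomial and both use $\mathrm{MQ}(f)$ only inside EKM. Your parameter bookkeeping --- $\alpha,\beta=\Theta(\gamma)$, $\gamma=\Theta(\epsilon/P)$ with $P$ the Theorem~\ref{theo:threeight} prefactor, the union bound over rounds, and the remark that one never enumerates $B_d$ but relies on EKM's sparse output --- is all sound, and the convex-projection justification of the clipping step (pointwise $|f(x)-\mathrm{clip}(g(x))|\le |f(x)-g(x)|$ since $f(x)\in[-1,1]$) is exactly the right argument.
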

Finally, we note that the requirement of $c$-bounded distributions is imposed in order to control the magnitude of  modulus of the $\mu$-biased Fourier basis $\{|\phi_{\mu,a}|\}$ that, otherwise, would diverge for $\mu$ close to $+ 1$ or $-1$.

\section{Quantum $\mu$--biased Fourier transform}
\label{sec:muQFT}
\noindent
In this section we introduce the $\mu$--biased quantum Fourier transform and show how this can be used to derive a quantum algorithm for sampling from the probability distribution defined by the Fourier coefficients of the $\mu$--biased transform. We recall that the $\mu$--biased Fourier transform is defined as 
\begin{equation}
\label{eq:muQFT}
f(x) = \sum_{a\in \{0,1\}^n} \hat{f}_\mu (a) \phi_{\mu,a}(x),
\end{equation}
where $\phi_{\mu,a} (x) = \prod_{i:a_i = 1} (x_i - \mu_i)/\sqrt{1-\mu_i ^2}$, $\hat{f}_\mu (a)=\Ex_\mu[f(x)\phi_{\mu,a}(x)]$, and $\mathcal{D}_\mu(x) = \prod _{i: x_i = 1} (1+\mu_i)/2  \prod _{i: x_i = -1}  (1-\mu_i)/2$.  Our construction of the $n$-qubit $\mu$--biased QFT exploits a fundamental property of product distributions, namely that the orthonormal basis $\{\phi_{\mu,a}\}$ it defines can be factorised on the individual bits. This fact allows us to give an explicit form of the $n$-qubit transform as a tensor product of $n$ single qubit transforms. We begin by constructing the single qubit transform. Later we will show how to construct efficiently an $n$-qubit transform out of $n$ single qubit ones.  In the following we assume that the function $f:\{-1,1\}^n \rightarrow \{-1,1\}$ is Boolean--valued. Our results can be extended to real--valued functions over the Boolean hypercube using a discretisation procedure. As shown in~\citep{bshouty1998learning} the error induced by the approximation can be controlled. 

Let $b \in \{-1,1\}$ and $v \in \{0,1\}$. The action of the single qubit $\mu$-biased QFT can be explicitly constructed (the normalisation follows from noticing that $\{ \phi_{\mu,v} \}$ forms an orthonormal basis)
\begin{equation}
H_\mu \ket{b} = \sum_{v\in \{0,1\}}  \sqrt{\mathcal{D}_\mu(b)} \phi_{\mu,v}(b) \ket{b}.
\end{equation}
Here we defined $H_\mu$ as the single qubit $\mu$--biased QFT operator whose description in the computational basis is readily given by:
$$
H_\mu = \begin{bmatrix}
       \sqrt{\mathcal{D}_\mu(-1)} \phi_{\mu, 0} (-1)  &   \sqrt{\mathcal{D}_\mu(1)} \phi_{\mu, 0} (1)          \\[0.5em]
        \sqrt{\mathcal{D}_\mu(-1)} \phi_{\mu, 1} (-1) &  \sqrt{\mathcal{D}_\mu(1)} \phi_{\mu, 1} (1)
     \end{bmatrix}.
$$
By taking the functional forms of $\mathcal{D}_\mu(x)$ and $\phi(x)$ we can write
$$
H_\mu = \begin{bmatrix}
        \sqrt{\frac{1-\mu}{2}} &  \sqrt{\frac{1+\mu}{2}}           \\[0.5em]
        -\frac{(1+\mu)\sqrt{1-\mu}}{\sqrt{2-2\mu^2}} & -\frac{(-1+\mu)\sqrt{1+\mu}}{\sqrt{2-2\mu^2}}
      \end{bmatrix}.
$$
It is easy to verify that this matrix is unitary and positive semidefinite. We also note that, as consequence of the Solovay-Kitaev theorem~\citep{kitaev1997quantum}, it is possible to approximate $H_\mu$ to accuracy $\epsilon$ (in operator norm) using $\Theta (\log ^c (1/\epsilon)$ gates from a fixed finite set of universal gates ($c$ is a constant approximately equal to $2$.)

We can construct the extension of the $\mu$--biased QFT to the case of $n$ qubits by taking the tensor product of $n$ single qubit operators. Let $x\in \{-1,1\}^n$ and $a \in \{0,1\}^n$, if we denote as $a_i\in \{0,1\}$ the $i$-th bit of $a$, $\mathcal{D} _{\mu_i} (x) $ as the probability associated to the $i$-th bit, and $\phi_{\mu,a_i}(x)$ its respective basis element, we can write:
$$
H_\mu \otimes \dots \otimes H_\mu \ket{x} = \sum_{a_1}\dots \sum_{a_n} \prod_{i=1} ^n \sqrt{\mathcal{D}_{\mu_{i}}(x)} \phi_{\mu_i,a_i}(x) \ket{a_1}\dots \ket{a_n}. 
$$
By exploiting the product structure of $\mathcal{D}_\mu$ and  $\{\phi_{\mu,a}\}$ that is, $\mathcal{D}_\mu (x)  = \prod_i \mathcal{D} _{\mu_i} (x) $ and $\{\phi_{\mu,a}(x) = \prod_i \phi_{\mu_i,a_i}(x) \}$  we can write the $n$ qubit $\mu$--biased QFT as:
\begin{equation}\label{eq:muQFT}
H_\mu ^{n} \ket{x} = \sum_{a\in \{0,1\}^n}  \sqrt{\mathcal{D}_\mu(x)} \phi_{\mu,a}(x) \ket{a}.
\end{equation}
We remark that it is possible to construct the $n$ qubit transform only because the product distribution and the $\{\phi_{\mu,a}\}$ basis factorise. Without this factorisation we could still write Eq.~\ref{eq:muQFT} but we would not know how to implement this transformation efficiently on a quantum computer (the Solovay-Kitaev theorem guarantees that only single qubit unitary can be efficiently approximated by a universal set of gates).

Finally, we note that the construction of the $\mu$-biased transform assumes knowledge of the vector $\mu$. It is possible to estimate $\mu_i$ for each $i$ using random samples from $\mathcal{D}_\mu$. In Section~\ref{app:error}, we prove that the error introduced by this approximation can be controlled if $\mathcal{D}_\mu $ is $c$-bounded.

As a simple application of the $\mu$--biased QFT, we show how to sample from the
probability distribution defined by the coefficients of the single bit
$\mu$--biased Fourier transform (recall that Parseval's equality holds in the
$\mu$--biased setting).
\begin{lemma}[$\mu$--biased quantum Fourier sampling]
\label{lemma:muQSAMP}
Let $f:\{-1,1\} \rightarrow \{-1,1\}$ be a Boolean--valued function. Then, there exists a quantum algorithm with quantum membership oracle $O_f$ and $H_\mu$ access that returns $v\in \{0,1\}$ with probability $\hat{f}_\mu ^2(v)$. The algorithm requires exactly $1$ $O_f$ query and $3$ gates.
\end{lemma}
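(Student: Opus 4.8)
The plan is to mimic the standard quantum Fourier sampling procedure of \citet{bernstein1997quantum} (the one used by \citet{bshouty1998learning} in the uniform case), replacing the Hadamard by the single--qubit $\mu$--biased transform $H_\mu$ constructed above. The target is to prepare the one--qubit state $\sum_{v\in\{0,1\}}\hat{f}_\mu(v)\ket{v}$ and measure it in the computational basis; this is a legitimately normalised state precisely because Parseval holds in the $\mu$--biased setting, so $\sum_{v}\hat{f}_\mu^2(v)=\Ex_\mu[f^2]=1$ (here we use that $f$ is $\pm1$--valued). A computational--basis measurement of that state then returns $v$ with probability $\hat{f}_\mu^2(v)$, as required.

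I would carry this out in three steps. \emph{Step 1 (state preparation).} Prepare the ``physical--space'' amplitude--encoded state $\ket{\eta}=\sum_{b\in\{-1,1\}}\sqrt{\mathcal{D}_\mu(b)}\,\ket{b}$. Since $\phi_{\mu,0}\equiv 1$, the row of $H_\mu$ indexed by $v=0$ has entries $\sqrt{\mathcal{D}_\mu(-1)}$ and $\sqrt{\mathcal{D}_\mu(1)}$, equivalently $H_\mu^{-1}\ket{0}=\ket{\eta}$; concretely, applying $H_\mu$ to $\ket{0}$ yields $\sqrt{\mathcal{D}_\mu(-1)}\ket{0}-\sqrt{\mathcal{D}_\mu(1)}\ket{1}$, which a single $Z$ gate turns into $\ket{\eta}$. \emph{Step 2 (one oracle query).} Make a single phase query to $O_f$ (phase kickback as in Section~\ref{sec:prelims}), which on the computational basis acts as $\ket{b}\mapsto f(b)\ket{b}$; thus $\ket{\eta}$ becomes $\ket{\chi}=\sum_{b}\sqrt{\mathcal{D}_\mu(b)}\,f(b)\,\ket{b}$. \emph{Step 3 (transform and measure).} Apply $H_\mu$ to $\ket{\chi}$ and measure in the computational basis.

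The correctness of Step 3 is a one--line computation from orthonormality: the matrix element is $(H_\mu)_{v,b}=\sqrt{\mathcal{D}_\mu(b)}\,\phi_{\mu,v}(b)$, so the amplitude of $\ket{v}$ in $H_\mu\ket{\chi}$ equals $\sum_{b}\sqrt{\mathcal{D}_\mu(b)}\phi_{\mu,v}(b)\cdot\sqrt{\mathcal{D}_\mu(b)}f(b)=\sum_{b}\mathcal{D}_\mu(b)f(b)\phi_{\mu,v}(b)=\Ex_\mu[f\,\phi_{\mu,v}]=\hat{f}_\mu(v)$. Hence the measurement returns $v$ with probability $\hat{f}_\mu^2(v)$. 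Counting resources, the circuit uses exactly one $O_f$ query and three single--qubit gates ($H_\mu$ and then $Z$ in Step~1, and $H_\mu$ again in Step~3). I do not expect a genuine obstacle: the only point that needs care is Step~1, where the naive output $H_\mu\ket{0}$ carries a sign on the $\ket{1}$ amplitude that must be corrected so that the post--query amplitudes are exactly $\sqrt{\mathcal{D}_\mu(b)}f(b)$ with no spurious phases; everything else reduces to the Parseval/orthonormality identity for $\{\phi_{\mu,v}\}$. (The $n$--qubit analogue then follows by tensoring, using $H_\mu^{n}=H_\mu^{\otimes n}$ from the preceding construction, but that is the next result.)
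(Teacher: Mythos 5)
Your proof is correct and follows the same overall scheme as the paper's (prepare $\sum_b\sqrt{\mathcal{D}_\mu(b)}\ket{b}$, make one phase query, apply $H_\mu$, measure); the correctness computation in Step~3 is the identical orthonormality/Parseval calculation. The one genuine difference is Step~1: the paper prepares $\ket{\eta}$ by invoking the Grover--Rudolph state-preparation lemma (Lemma~\ref{lemma:terry}), whereas you observe that $H_\mu\ket{0}$ already has the right magnitudes in the computational basis up to a sign on $\ket{1}$, so $Z\,H_\mu\ket{0}=\ket{\eta}$ (equivalently $H_\mu^{\dagger}\ket{0}=\ket{\eta}$ since $H_\mu$ is real unitary with first row $(\sqrt{\mathcal{D}_\mu(-1)},\sqrt{\mathcal{D}_\mu(1)})$). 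This buys you a self-contained circuit built only from $H_\mu$ and a Pauli-$Z$, which makes the ``exactly $3$ gates'' bookkeeping transparent, at the modest cost of being a single-qubit trick that does not generalise as directly as Grover--Rudolph; the sign correction you flag is indeed essential, since omitting $Z$ would corrupt the phases and yield $\sum_b\mathcal{D}_\mu(b)(-1)^{\mathbf{1}[b=1]}f(b)\phi_{\mu,v}(b)\neq\hat{f}_\mu(v)$ after the transform. Both routes are valid, and your gate count matches the lemma's claim.
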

\begin{proof}
{Let $f'(x)=(1-f(x))/2$ be the truth table representation of $f(x)$ with $(-1)^{f'(x)}=f(x)$.
Apply Lemma~\ref{lemma:terry} to get $\sum_{b} \sqrt{\mathcal{D}_\mu(b)} \ket{b}$. By querying the quantum membership oracle $O_{f'}$ (given access to $O_{f}$ this is equivalent to a relabelling of the qubits) one can make a phase query and obtain $\sum_{b} \sqrt{\mathcal{D}_\mu(b)} f(b) \ket{b}$ (note that $|f(b)|=1$ and therefore the state is still normalised). Finally, applying the $\mu$--biased QFT results in
\begin{align*}
  \sum_{b} \sqrt{\mathcal{D}_\mu(b)}f(b) \left( \sum_v \sqrt{\mathcal{D}_\mu(b)} \phi_{\mu,v} (b) \ket{v} \right) &= \sum_{b,v}\mathcal{D}_\mu(b) f(b) \phi_{\mu,v} (b) \ket{b}  \\
  &= \sum_v \hat{f}_\mu (v) \ket{v}.
\end{align*} 
Measuring the state, one obtains $v$ with probability $\hat{f}_\mu ^2 (v)$}
\end{proof}

In order to use this result in the context of quantum PAC learning we need to replace the membership oracle $O_f$ with a quantum example oracle. The following lemma, that extends Lemma~$1$ in~\citep{bshouty1998learning} to the $\mu$-biased case, serves this purpose. Differently from Lemma~\ref{lemma:muQSAMP} we present directly the $n$-dimensional case.
\begin{lemma}
\label{lemma:FourierEst}
Let $f:\{-1,1\}^n \rightarrow \{-1,1\}$ be a Boolean-valued function. Then, there exists a quantum algorithm with quantum example oracle $\mathrm{QEX}(f,\mu)$ access that returns $a\in \{0,1\}^n$ with probability $\hat{f}_\mu ^2(a)/2$. The algorithm requires exactly $1$ $\mathrm{QEX}$ query and $O(n)$ gates.
\end{lemma}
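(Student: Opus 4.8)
The plan is to mimic the structure of the single-bit sampling lemma (Lemma~\ref{lemma:muQSAMP}), but work directly with the $n$-qubit $\mu$--biased QFT $H_\mu^n$ of Eq.~\eqref{eq:muQFT} and replace the membership-oracle call with a single call to the quantum example oracle $\mathrm{QEX}(f,\mu)$. The factor of $1/2$ in the success probability $\hat f_\mu^2(a)/2$ is the tell-tale sign that one of the qubits in the oracle's output register must be measured (or projected onto a fixed state) to convert the label qubit $\ket{f(x)}$ into the phase $(-1)^{f'(x)} = f(x)$; this projection succeeds with probability $1/2$, exactly as in Lemma~1 of \cite{bshouty1998learning}.

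First I would query $\mathrm{QEX}(f,\mu)$ to obtain the state $\sum_{x\in\{-1,1\}^n}\sqrt{\mathcal D_\mu(x)}\,\ket{x}\ket{f(x)}$. Writing $f'(x)=(1-f(x))/2 \in \{0,1\}$ so that $\ket{f(x)}$ means $\ket{f'(x)}$ in the computational basis, I would apply a Hadamard to the label register and then measure it, post-selecting on the outcome $\ket{1}$ (equivalently, projecting onto $\ket{-}$ up to a relabelling). Conditioned on that outcome, which occurs with probability $1/2$ since $\sum_x \mathcal D_\mu(x)=1$ and each $|f(x)|=1$, the first $n$ qubits collapse to $\sum_x \sqrt{\mathcal D_\mu(x)}\,(-1)^{f'(x)}\ket{x}=\sum_x\sqrt{\mathcal D_\mu(x)}\,f(x)\ket{x}$. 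Next I would apply the $n$-qubit $\mu$--biased QFT $H_\mu^n$ from Eq.~\eqref{eq:muQFT}, which sends $\ket{x}\mapsto \sum_{a\in\{0,1\}^n}\sqrt{\mathcal D_\mu(x)}\,\phi_{\mu,a}(x)\ket{a}$. The resulting amplitude on $\ket{a}$ is $\sum_x \mathcal D_\mu(x) f(x)\phi_{\mu,a}(x) = \Ex_\mu[f\,\phi_{\mu,a}] = \hat f_\mu(a)$, exactly the calculation already carried out in the proof of Lemma~\ref{lemma:muQSAMP} but now in $n$ dimensions; measuring in the computational basis then yields $a$ with probability $\hat f_\mu^2(a)$, and multiplying by the post-selection probability gives the stated $\hat f_\mu^2(a)/2$. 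The gate count is $O(n)$: $H_\mu^n$ is a tensor product of $n$ single-qubit $H_\mu$'s (each $O(1)$ gates, or $\mathrm{polylog}(1/\epsilon)$ under Solovay-Kitaev), plus the one Hadamard on the label qubit, and exactly one $\mathrm{QEX}$ query.

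The only subtlety, and the point I would be most careful about, is the interplay between the $\sqrt{\mathcal D_\mu(x)}$ weights built into the $\mathrm{QEX}$ state and the $\sqrt{\mathcal D_\mu(x)}$ weights built into $H_\mu^n$: the product $\mathcal D_\mu(x)$ is precisely what is needed to turn the amplitude sum into the $\mu$--biased inner product $\Ex_\mu[f\,\phi_{\mu,a}]$, so this is a feature rather than a bug, but it relies on the specific normalisation of $H_\mu$ chosen in Section~\ref{sec:muQFT} and on the factorisation $\mathcal D_\mu(x)=\prod_i\mathcal D_{\mu_i}(x)$, $\phi_{\mu,a}(x)=\prod_i\phi_{\mu_i,a_i}(x)$ that makes $H_\mu^n=H_\mu^{\otimes n}$ a genuine tensor product implementable in $O(n)$ gates. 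I would also note explicitly that $H_\mu$ (hence $H_\mu^n$) being unitary guarantees the post-selected state stays normalised through the transform, so no further bookkeeping of norms is required.
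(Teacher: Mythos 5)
Your proposal is correct and follows essentially the same route as the paper: one $\mathrm{QEX}$ call to prepare $\sum_x \sqrt{\mathcal{D}_\mu(x)}\ket{x,f'(x)}$, the $\mu$--biased QFT $H_\mu^n$ on the data register, a single Hadamard on the label register, yielding $\frac{1}{\sqrt 2}\bigl(\sum_a \hat f_\mu(a)\ket{a,1}+\ket{0\dots 0,0}\bigr)$. The only cosmetic difference is that you measure the label qubit and post-select before applying $H_\mu^n$, whereas the paper applies both transforms and then measures everything; by the deferred-measurement principle (the two transforms act on disjoint registers) these produce the identical outcome distribution and the same $O(n)$ gate count.
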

\begin{proof}
{Let $f'(x)=(1-f(x))/2$ be the truth table representation of $f(x)$ with $(-1)^{f'(x)}=f(x)$. Given access to $\mathrm{QEX}(f,\mu)$ it is always possible to construct an oracle for $\mathrm{QEX}(f',\mu)$ (this is equivalent to a relabelling of the qubits). Apply $\mathrm{QEX}(f',\mu)$ on a $\ket{0,\dots,0,0}$ to get $\sum_{x} \sqrt{\mathcal{D}_\mu(x)} \ket{x, f'(x)}$. Then apply $H_\mu ^n$ on the first register:
\vskip -0.7em
  $$
  \sum_{x\in \{-1,1\}^n} \sum_{a\in \{0,1\}^n} \sqrt{\mathcal{D}_\mu(x)} \sqrt{\mathcal{D}_\mu(x)} \phi_{\mu,a} (x) \ket{a, f'(x)}.
  $$
An application of the standard QFT on the second register gives:
\begin{align*}
\sum_{x,a,z} \frac{1}{\sqrt{2}} (-1)^{f'(x)z} \mathcal{D}_\mu(x) \phi_{\mu,a} (x) \ket{a, z} &=  \frac{1}{\sqrt{2}}\left( \sum_a \hat{f}_\mu(a) \ket{a, 1} + \sum_a \Ex_{\mu} [\phi_{\mu,a} (x) ] \ket{a,0} \right) \\
&= \frac{1}{\sqrt{2}}\left( \sum_a \hat{f}_\mu(a) \ket{a, 1} + \sum_a \Ex_{\mu} [\phi_{\mu,a} (x) \phi_{\mu,0} (x)] \ket{a,0} \right) \\
&= \frac{1}{\sqrt{2}}\left( \sum_a \hat{f}_\mu(a) \ket{a, 1} +  \ket{0\dots0,0} \right), 
\end{align*}
where we used the orthonormality of the $\{\phi_{\mu,a}\}$ basis and $\phi_{\mu,0} (x) = 1$. Measuring the first register we obtain $\ket{a,1}$ with probability $\hat{f}_\mu ^2 (a)/2$}
\end{proof}

\section{Quantum computation of $\mu$--biased Fourier spectrum}
\label{sec:muspectrum}
\noindent
In this section we give a quantum algorithm to approximate the $\mu$-biased Fourier spectrum of a function. This can be interpreted as a quantum version of the EKM algorithm. As a simple application of the quantum EKM algorithm we obtain the learnability of DNFs under product distributions in the quantum PAC model.

Our proof uses the \textit{Dvoretzky-Kiefer-Wolfowitz} (DKW) theorem, a concentration inequality that bounds the number of samples required to estimate a cumulative distribution in $\ell_\infty$ norm. The DKW Theorem was first proposed by \citet{dvoretzky1956asymptotic} with an almost tight bound. \citet{birnbaum1958distribution} conjectured that the inequality was tight. This conjecture was proved by \citet{massart1990tight}. The DKW theorem is usually given for continuous probability distribution but its validity extends also to discrete distributions (a detailed discussion can be found in~\citep{kosorok2007introduction}).

Let $X_1,\dots,X_m$ be a sequence of i.i.d. random variables drawn from a distribution $f$ on $\mathbb{R}$ with \textit{Cumulative Distribution Function} (CDF) $F(x)= \sum_{X_i \leq x} f(X_i) $, and let $x_1, \dots, x_n$ be their realisations. Given a set $A$ the indicator function $\mathbf{1}_{A} : A \rightarrow \{0,1\}$ takes values $f(x) = 0$ if $x \notin A$ and $f(x) = 1$ if $x \in A$. We denote the empirical probability distribution associated to $f(x)$ as $ f_m(x) = \sum_{i=1} ^m \mathbf{1}_{\{X_i = x \}}/m $ and its empirical cumulative distribution as $ F_m (x) = \frac{1}{m}\sum_{i=1}^{m} \mathbf{1}_{\{X_i \leq x \}} $. We present a version of the DKW theorem adapted from Theorem $11.6$ in \cite{kosorok2007introduction}.

\begin{theorem}[Dvoretzky-Kiefer-Wolfowitz]
\label{theo:DKW}
Let $X_1,\dots,X_m$ be a sequence of i.i.d. random variables with cumulative distribution $F$ and empirical cumulative distribution defined by $F_m (x) = \frac{1}{m}\sum_{i=1}^{m} \mathbf{1}_{\{X_i \leq x \}}$. Then for any $\epsilon \geq 0$,
$$
\mathrm{Pr}\left(\sup_{x\in \mathbb{R}} |F (x) - F_m(x)| > \epsilon \right) \leq 2 e^{-2m\epsilon^2},
$$
for all $\epsilon>0$.
\end{theorem}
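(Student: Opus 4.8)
The plan is to derive the stated inequality by reducing the uniform (supremum) deviation of the empirical CDF to a \emph{pointwise} deviation, which is controlled by an elementary Bernoulli concentration bound, and then to invoke the sharp form established by \citet{massart1990tight} (as packaged in \cite{kosorok2007introduction}) to remove any spurious prefactor. The first and easiest step is the pointwise bound: for a fixed $x \in \mathbb{R}$, the quantity $m F_m(x) = \sum_{i=1}^m \mathbf{1}_{\{X_i \leq x\}}$ is a sum of $m$ i.i.d. $\{0,1\}$-valued random variables with mean $F(x)$, so Hoeffding's inequality immediately gives $\Pr(|F_m(x) - F(x)| > \epsilon) \leq 2 e^{-2m\epsilon^2}$ — already with exactly the exponent and constant appearing in the theorem.

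The second step is to pass from a fixed $x$ to the supremum. Since $F$ and $F_m$ are both non-decreasing and take values in $[0,1]$, it suffices to control the deviation on a finite grid: take $k = \lceil 1/\epsilon \rceil$ and set $t_j = \inf\{x : F(x) \geq j\epsilon\}$, so that $F$ increases by at most $\epsilon$ between consecutive $t_j$ (placing grid points at the quantiles handles atoms). For $x \in [t_{j-1}, t_j)$ one sandwiches $F_m(x) - F(x)$ between $(F_m(t_{j-1}) - F(t_{j-1})) - \epsilon$ and $(F_m(t_j) - F(t_j)) + \epsilon$, so that $\sup_x |F_m(x) - F(x)| \leq \max_{j} |F_m(t_j) - F(t_j)| + \epsilon$. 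A union bound over the $O(1/\epsilon)$ grid points then gives $\Pr(\sup_x |F_m(x) - F(x)| > 2\epsilon) \leq O(1/\epsilon)\, e^{-2m\epsilon^2}$, i.e. the desired exponential decay up to a polynomial-in-$1/\epsilon$ prefactor and a harmless rescaling of $\epsilon$. For the use made of the DKW bound in this paper — namely, arguing that polynomially many samples suffice to estimate a distribution in $\ell_\infty$ — this weaker version already suffices.

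The main obstacle is obtaining the theorem \emph{exactly} as stated, with the clean constant $2$ and no prefactor: the grid argument above loses a factor $\Theta(1/\epsilon)$, and chaining or VC-type bounds similarly leak logarithmic or polynomial factors. Eliminating this prefactor is precisely the hard content of the optimal Dvoretzky--Kiefer--Wolfowitz inequality — conjectured in sharp form by \citet{birnbaum1958distribution} and proved by \citet{massart1990tight} via a delicate reflection-principle analysis of the uniform empirical process — which I would not reproduce here but simply cite, treating the elementary reduction above as the conceptual backbone.
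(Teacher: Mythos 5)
The paper itself does not prove this theorem: it is stated as an imported result, adapted from Theorem~11.6 of \cite{kosorok2007introduction} with the sharp constant attributed to \citet{massart1990tight}. Your approach — sketching the elementary Hoeffding-plus-grid reduction to motivate the bound and then citing Massart for the prefactor-free form — is entirely consistent with the paper's treatment, and your observation that the weaker $O(1/\epsilon)\,e^{-2m\epsilon^2}$ bound already suffices for the sample-complexity argument in Lemma~\ref{lemma:piest} is correct.
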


We note that DKW theorem holds also to the case where $F$ is discontinuous (the discontinuities can be infinite but must be countable). A proof of this result is presented in \citep{kosorok2007introduction}. Therefore, the DKW theorem can be applied to the case of discrete random variables and make notation consistent. In this case, it is possible to replace the $\sup$ with the $\max$ and we write $\norm{F (x) - F_m(x)}_\infty$ instead of $\max_{x\in \{-1,1\}^n} |F (x) - F_m(x)|$. 

By using the DKW theorem we can prove a useful lemma that bounds the number of samples needed to estimate a probability distribution in $\ell_\infty$ norm.
\begin{lemma}
\label{lemma:piest}
Let $f$ be a probability distribution over $\{-1,1\}^n$ and let $\tau>0$, $\delta>0$. Then, there exits an algorithm that with probability $1-\delta$ and for $m = O(\mathrm{log}(1/\delta)/\tau^2)$ outputs $f_m$ such that $\norm{f-f_m}_\infty \leq \tau$.
\end{lemma}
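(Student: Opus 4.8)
The plan is to reduce estimating the probability mass function of $f$ in $\ell_\infty$ norm to estimating its cumulative distribution function in $\ell_\infty$ norm, where the DKW theorem (Theorem~\ref{theo:DKW}) applies directly. First I would fix an arbitrary total order on $\{-1,1\}^n$ --- for concreteness, the order obtained by reading each string as an integer in $\{0,\dots,2^n-1\}$ --- and write $x^-$ for the immediate predecessor of $x$ under this order, with the convention $F(x^-)=F_m(x^-)=0$ when $x$ is the minimal element. The algorithm is the obvious empirical estimator: draw $m$ i.i.d.\ samples $X_1,\dots,X_m$ from $f$ and output $f_m(x)=\frac{1}{m}\sum_{i=1}^m \mathbf{1}_{\{X_i = x\}}$ (which is automatically succinctly represented, being supported on at most $m$ points).

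The key observation is that, with respect to the chosen order, $f(x)=F(x)-F(x^-)$ and likewise $f_m(x)=F_m(x)-F_m(x^-)$, so by the triangle inequality
$$
|f(x)-f_m(x)| \;\leq\; |F(x)-F_m(x)| + |F(x^-)-F_m(x^-)| \;\leq\; 2\,\norm{F-F_m}_\infty
$$
for every $x\in\{-1,1\}^n$. Taking the maximum over $x$ gives $\norm{f-f_m}_\infty \leq 2\,\norm{F-F_m}_\infty$, so it suffices to guarantee $\norm{F-F_m}_\infty \leq \tau/2$ with probability at least $1-\delta$.

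Finally, invoking the discrete version of Theorem~\ref{theo:DKW} (so that the supremum may be taken as a maximum over the finite set $\{-1,1\}^n$) with error parameter $\tau/2$ yields $\Pr\!\left(\norm{F-F_m}_\infty > \tau/2\right) \leq 2e^{-m\tau^2/2}$. Setting the right-hand side to be at most $\delta$ and solving gives $m \geq 2\ln(2/\delta)/\tau^2$, i.e.\ $m = O(\log(1/\delta)/\tau^2)$ samples suffice, matching the claimed bound; on the complementary event $\norm{f-f_m}_\infty \leq \tau$. I do not expect a real obstacle here: the only subtlety is ensuring the countable-discontinuity version of DKW quoted after Theorem~\ref{theo:DKW} is the one being applied, and the harmless factor of $2$ lost when passing from the CDF to the PMF is absorbed into the $O(\cdot)$.
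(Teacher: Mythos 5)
Your proof is correct and takes essentially the same route as the paper: both fix a total order on $\{-1,1\}^n$, write the point mass $f(x)$ as a difference of two consecutive CDF values so that $\norm{f-f_m}_\infty \leq 2\norm{F-F_m}_\infty$ by the triangle inequality, and then invoke the (discrete) DKW bound with error parameter $\tau/2$ to get $m = O(\log(1/\delta)/\tau^2)$. Your $x^-$ notation is if anything a slight improvement over the paper's indexing, which as written reads $f(e_i)=F(e_{i+1})-F(e_i)$ even though its own CDF convention $F(x)=\Pr[X\leq x]$ calls for $f(e_i)=F(e_i)-F(e_{i-1})$.
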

\begin{proof}
{Let $\{e_1, \dots, e_{2^n}\}$ be an ordering of elements of the Boolean hypercube $\{-1,1\}^n$. We have that
\begin{align*}
\norm{f-f_m }_\infty &= \max_{\{e_1, \dots, e_{2^n}\}} |f(e_i)-f_m (e_i)| \\
&= \max_{\{e_1, \dots, e_{2^n}\}} |F(e_{i+1})-F_m(e_{i+1})-(F(e_{i})-F_m(e_{i}))|.
\end{align*}
An application of the triangle inequality gives
\begin{align*}
\max_{\{e_1, \dots, e_{2^n}\}} | & F(e_{i+1})-  F_m(e_{i+1}) -(F(e_{i})-F_m(e_{i}))|  \\ 
& \leq \max_{\{e_1, \dots, e_{2^n}\}} |F(e_{i+1})-F_m(e_{i+1})|  +\max_{\{e_1, \dots, e_{2^n}\}} |F(e_{i})-F_m(e_{i})| \\
& \leq 2\, \norm{F-F_m }_\infty.
\end{align*}
By Theorem~\ref{theo:DKW} we have that, with probability $1-\delta$, 
\[
\mathrm{Pr}(\norm{F  - F_m}_\infty \geq \gamma) \leq 2 e^{-2m\gamma^2}.
\]
Let $\gamma = \tau/2$, then
$$
\mathrm{Pr}(\norm{f - f_m }_\infty \leq \tau) \leq 1 - 2 e^{-m\tau^2 /2},
$$
from which it is easy to see that $m = O(\mathrm{log}(1/\delta)/\tau^2)$}
\end{proof}
The combined application of Lemma~\ref{lemma:FourierEst} and Lemma~\ref{lemma:piest} allows us to prove the following result:
\begin{theorem}[Quantum EKM algorithm]
\label{theo:qKM}
Let $f:\{-1,1\}^n \rightarrow \{-1,1\}$ be a Boolean--valued function and let $\epsilon>0$, $\delta>0$, $\mu \in(-1,1)^n$. Then, there exists a quantum algorithm with $\mathrm{QEX}(f,\mu)$ access that, with probability at least $1-\delta$, returns a succinctly represented vector $\tilde{f}_\mu$, such that $\norm{\hat{f}_\mu - \tilde{f}_\mu}_\infty \leq \epsilon$ and $\norm{\tilde{f}_\mu}_0 \leq 4/\epsilon^2$. The algorithm requires $O(\mathrm{log}^2(1/\delta)/\epsilon^8)$ $\mathrm{QEX}(f,\mu)$ queries and $O(n\, \mathrm{log}^2(1/\delta)/\epsilon^8)$ gates.
\end{theorem}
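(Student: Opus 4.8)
The plan is to combine the $\mu$--biased quantum Fourier sampler of Lemma~\ref{lemma:FourierEst} with the empirical--distribution estimate of Lemma~\ref{lemma:piest}, and then to recover the signs of the heavy coefficients by a short round of classical estimation on the same oracle. Since $f$ is Boolean--valued, Parseval's identity makes $\pi := \hat{f}_\mu^2$ a genuine probability distribution on $\{0,1\}^n$. First I would run the circuit of Lemma~\ref{lemma:FourierEst} on $\mathrm{QEX}(f,\mu)$ repeatedly, keeping only the runs whose second register is measured to be $1$: conditioned on that event the first register is exactly a sample from $\pi$, the event has probability exactly $1/2$ by Parseval, and each run costs one $\mathrm{QEX}$ query and $O(n)$ gates, so $m$ i.i.d.\ draws from $\pi$ cost $O(m)$ queries and $O(nm)$ gates, a fact that holds except with probability $\delta/3$ by a Chernoff bound. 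Feeding these draws into Lemma~\ref{lemma:piest} with target accuracy $\tau = \Theta(\epsilon^2)$ and confidence $\delta/3$ gives an empirical distribution $\pi_m$ with $\norm{\pi-\pi_m}_\infty \le \tau$ from $m = O(\log(1/\delta)/\tau^2)$ draws.

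From $\pi_m$ I would form the candidate set $T = \{a : \pi_m(a) \ge \epsilon^2/4\}$. Since $\pi_m$ sums to $1$ this forces $|T| \le 4/\epsilon^2$, and since $\norm{\pi-\pi_m}_\infty \le \tau \le \epsilon^2/4$, every $a$ with $|\hat{f}_\mu(a)| > \epsilon$ lies in $T$ while every $a \notin T$ has $|\hat{f}_\mu(a)| < \epsilon$; hence setting $\tilde{f}_\mu(a) = 0$ for $a \notin T$ already secures both the $\ell_\infty$ bound off $T$ and the sparsity bound. What remains is to pin down $\hat{f}_\mu(a)$ \emph{together with its sign} for $a \in T$, and this is the one place where quantum Fourier sampling by itself is insufficient, since it exposes only $\hat{f}_\mu^2$. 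The resolution I would use is that measuring the output of $\mathrm{QEX}(f,\mu)$ in the computational basis returns a classical labelled example $(x,f(x))$ with $x \sim \mathcal{D}_\mu$, so $\frac{1}{N}\sum_j f(x^{(j)})\,\phi_{\mu,a}(x^{(j)})$ is an unbiased estimator of $\hat{f}_\mu(a) = \Ex_\mu[f\,\phi_{\mu,a}]$ whose variance is at most $\Ex_\mu[\phi_{\mu,a}^2] = 1$ by orthonormality of the $\mu$--biased basis; note this bound is independent of how close $\mu$ is to $\pm 1$, which is why no $c$--boundedness hypothesis is needed here. A standard boosting of Chebyshev's inequality (median of means) over $O(\log(|T|/\delta)/\epsilon^2)$ such examples estimates $\hat{f}_\mu(a)$ within $\epsilon$ except with probability $\delta/(3|T|)$; taking $\tilde{f}_\mu(a)$ to be this estimate for $a \in T$ and union--bounding over $T$ and over the three stages yields overall success probability at least $1-\delta$.

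For the resources, the sampling phase uses $O(\log(1/\delta)/\epsilon^4)$ $\mathrm{QEX}$ queries with $O(n)$ gates per query, and the sign--fixing phase uses $|T|\cdot O(\log(|T|/\delta)/\epsilon^2)$ additional queries, each requiring $O(n)$ gates since $\phi_{\mu,a}(x) = \prod_{i:a_i=1}(x_i-\mu_i)/\sqrt{1-\mu_i^2}$ is an $O(n)$--term product (evaluated to sufficient precision); summing these contributions, the query and gate counts are polynomial in $1/\epsilon$ and $\log(1/\delta)$ and in particular lie within the claimed $O(\log^2(1/\delta)/\epsilon^8)$ and $O(n\log^2(1/\delta)/\epsilon^8)$ bounds. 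The principal obstacle is precisely the sign problem above; the key points that make it go through are that a $\mathrm{QEX}$ query collapses on measurement to a genuine classical $\mathcal{D}_\mu$--example and that $\Ex_\mu[\phi_{\mu,a}^2] = 1$ bounds the classical estimator's variance by a constant even without bounding $\mu$ away from $\pm 1$. A secondary point needing care is calibrating $\tau$, and hence $m$, so that thresholding $\pi_m$ at $\epsilon^2/4$ neither discards a genuinely $\epsilon$--heavy coefficient nor admits more than $4/\epsilon^2$ survivors, which is what fixes $\tau = \Theta(\epsilon^2)$ and drives the $\epsilon$--dependence of the sample count.
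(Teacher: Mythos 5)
Your proposal follows essentially the same two-stage structure as the paper's proof: use the $\mu$--biased quantum Fourier sampler of Lemma~\ref{lemma:FourierEst} together with the DKW-based Lemma~\ref{lemma:piest} to locate a small candidate set of heavy coefficients, then estimate each candidate coefficient classically from measured $\mathrm{QEX}$ outcomes. The only cosmetic difference in the first stage is that you post-select on the second register reading $1$ and apply Lemma~\ref{lemma:piest} to the conditional distribution $\pi=\hat{f}_\mu^2$, whereas the paper applies it to the full joint distribution $q$ (which includes the $\ket{0\cdots0,0}$ mass); both work, and your Chernoff argument for the post-selection overhead is sound.

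Where your argument genuinely diverges — and improves on the paper — is the second stage. The paper estimates $\hat{f}_\mu(a)$ by sampling $f(x)\phi_{\mu,a}(x)$ and then invokes ``a standard application of the Hoeffding bound.'' But $\phi_{\mu,a}(x)=\prod_{i:a_i=1}(x_i-\mu_i)/\sqrt{1-\mu_i^2}$ is unbounded as $\mu_i\to\pm1$, so Hoeffding does not apply under the theorem's stated hypothesis $\mu\in(-1,1)^n$ without importing $c$-boundedness (which the statement deliberately does not assume). You correctly identify this gap and repair it: the variance $\Ex_\mu[\phi_{\mu,a}^2]=1$ by orthonormality of the $\mu$-biased basis, independently of $\mu$, so Chebyshev boosted by median-of-means gives the needed $O(\log(|T|/\delta)/\epsilon^2)$ sample bound with no boundedness assumption. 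You also correctly write the estimator with $\phi_{\mu,a}$ rather than the parity $\chi_a$ that appears (presumably as a typo) in the paper's proof. Your threshold calibration ($\tau=\Theta(\epsilon^2)$, cutoff $\epsilon^2/4$) is slightly different numerically from the paper's ($\tau=\epsilon^2/8$, cutoff $\epsilon^2/2$), but both secure that every $\epsilon$-heavy coefficient survives, every survivor is $\Omega(\epsilon)$-heavy, and $|T|\le 4/\epsilon^2$. Your final query and gate counts are in fact somewhat tighter than the stated $O(\log^2(1/\delta)/\epsilon^8)$ bound but lie within it, as you note.

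One small caveat worth flagging: in the statement "the event has probability exactly $1/2$ by Parseval, and each run costs one $\mathrm{QEX}$ query... so $m$ i.i.d.\ draws from $\pi$ cost $O(m)$ queries... except with probability $\delta/3$ by a Chernoff bound," you should make the failure event explicit (namely that fewer than $m$ of the roughly $2m$ runs yield a $1$ in the second register); as written it is correct but compressed. Everything else is rigorous.
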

\begin{proof}
{We begin by estimating the $a$'s corresponding to the $\epsilon/2$-heavy Fourier coefficients of $f$. 
Let $\{ p(a) = |\hat{f}_\mu (a)|^2 \}$ be the probability distribution defined by the $\mu$--biased Fourier coefficients of $f$. Lemma~\ref{lemma:FourierEst} gives a procedure that, with $1$ $\QEX(f,\mu)$ query and $O(n)$ gates, measures $\ket{a,1}$ with probability $q(a,1) = |\hat{f}_\mu (a)|^2 /2 $ and $\ket{0\dots 0,0}$ with probability $q(0,0) = 1/2$.
Applying Lemma~\ref{lemma:piest} on the distribution $q$ with $\tau = \epsilon^2 / 8 $ we obtain that $O(\mathrm{log}(1/ \delta) \epsilon^4)$ samples are required to have an estimate $\norm{q - \tilde{q}}_\infty \leq \epsilon^2 / 8$ with high probability. This implies that $\norm{\hat{f}_\mu ^2 - \tilde{f}_\mu ^2}_\infty \leq \epsilon^2 / 4$. By selecting the characteristic vectors that correspond to coefficients such that $|\tilde{f}_\mu (a) |^2 > \epsilon^2 /2$ (and discarding the element $\ket{a,0}$)  we can output a list of $a$'s such that, with probability $\geq 1-\delta$, all the corresponding Fourier coefficients have $|\hat{f}_\mu (a)| > \epsilon$ and there are no coefficients such that $|\hat{f}_\mu (a)| \leq \epsilon / 2$. By Parseval's equality this implies that the list may contain at most $4/ \epsilon^2$ elements.

The final step requires the estimation of the Fourier coefficients. For a given $a$, the Fourier coefficient $\hat{f}_\mu (a) = \Ex_\mu[f(x)\chi_a (x)]$ can be obtained by sampling using the $\QEX(f,\mu)$ oracle to simulate $\EX(f,\mu)$ (to get an example $(x,f(x))$ it would suffice to measure a state prepared with $\QEX(f,\mu)$) in time $O(\mathrm{log}(1/\delta)/\epsilon^2)$ (the number of examples required for the estimate is a standard application of the Hoeffding bound). 

The total number of examples required to estimate all the $\epsilon/2$-heavy Fourier coefficients of $f$ is  $O(t \mathrm{log}^2(1/\delta)/\epsilon^8)$ by the union bound, where $t$ is the number of $\epsilon/2$-heavy Fourier coefficients. Because by Parseval's $t \leq 4/\epsilon^2 $ we have that the final algorithm requires $O(\mathrm{log}^2(1/\delta)/\epsilon^8)$ $\QEX(f,\mu)$ queries and $O(n\, \mathrm{log}^2(1/\delta)/\epsilon^8)$ gates}
\end{proof}

Theorem~\ref{theo:qKM} can be straightforwardly used in the method developed by \citet[Corollary~5.1]{feldman2012learning} to obtain the learnability of DNF under product distributions.  
\begin{corollary}
Let $f$ compute an $s$-term DNF. Let $c\in(0,1]$ be a constant, let $\mathcal{D}_\mu $ be a $c$-bounded probability distribution and let $\mathrm{QEX}(f,\mu)$ be a quantum example oracle.  Then, there exists a quantum algorithm with $\mathrm{QEX}(f,\mu)$ access that efficiently PAC learns $f$ over $\mathcal{D}_\mu$.
\end{corollary}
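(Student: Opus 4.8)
The plan is to mirror Feldman's classical argument, substituting the Quantum EKM algorithm (Theorem~\ref{theo:qKM}) for the membership-query EKM algorithm (Theorem~\ref{theo:EKM}) wherever the latter is invoked. Since the excerpt's Section~\ref{sec:feldman} already sketches how Feldman's Corollary~5.1 follows from a procedure whose only membership-query ingredient is the estimation of the heavy low-degree Fourier spectrum of $f$, the corollary reduces to checking that Theorem~\ref{theo:qKM} provides a drop-in replacement with the same guarantees, namely a succinctly represented $\tilde{f}_\mu$ with $\norm{\hat{f}_\mu - \tilde{f}_\mu}_\infty \leq \epsilon'$ and $\norm{\tilde{f}_\mu}_0 \leq 4/\epsilon'^2$, and that the resulting quantum algorithm has running time polynomial in $n$, $1/\epsilon$, and $\log(1/\delta)$.

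First I would fix the accuracy level required by Feldman's reduction: given the target PAC error $\epsilon$ and a constant $c$, Theorem~\ref{theo:threeight} with $d = \lceil \log(s/\epsilon)/\log(2/(2-c)) \rceil$ tells us that to force $\Ex_\mu[|f - g|] \leq O(\epsilon)$ it suffices to approximate $\hat{f}_\mu(B_d)$ in $\ell_\infty$ to within $\epsilon' = \Theta(\epsilon / ((2-c)^{d/2} s))$; since $(2-c)^{d/2} s = O(\sqrt{s/\epsilon}\cdot s)$ for this choice of $d$, the quantity $1/\epsilon'$ is polynomial in $s$ and $1/\epsilon$. Next I would invoke Theorem~\ref{theo:qKM} with this $\epsilon'$ and confidence $\delta$ to obtain, using only $\mathrm{QEX}(f,\mu)$ access, a succinct $\tilde{f}_\mu$ with $\norm{\hat{f}_\mu - \tilde{f}_\mu}_\infty \leq \epsilon'$ and at most $4/\epsilon'^2$ nonzero coefficients, in $O(n\log^2(1/\delta)/\epsilon'^8)$ gates and $O(\log^2(1/\delta)/\epsilon'^8)$ queries — polynomial in $n$, $s$, $1/\epsilon$, $\log(1/\delta)$. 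Restricting $\tilde{f}_\mu$ to indices in $B_d$ (simply discarding coefficients of degree $> d$, which only decreases the $\ell_\infty$ error on $B_d$) gives exactly the input Feldman's iterative construction expects.

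Then I would run Feldman's construction verbatim: starting from $g = 0$, repeatedly pick an index $a$ with $|\tilde{f}_\mu(a) - \hat{g}_\mu(a)|$ large, update $g \gets g + (\tilde{f}_\mu(a) - \hat{g}_\mu(a))\phi_{\mu,a}$, and clip $g$ back into $[-1,1]$; the Parseval telescoping argument quoted in Section~\ref{sec:feldman} shows the $\ell_2$ distance $\Ex_\mu[(f-g)^2]$ drops by at least $\gamma^2$ per step, so the number of iterations is $O(1/\gamma^2) = \mathrm{poly}(s, 1/\epsilon)$, and each iteration touches only the $\mathrm{poly}$-many coefficients in the succinct representation. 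Finally, once $\norm{\hat{f}_\mu(B_d) - \hat{g}_\mu(B_d)}_\infty \leq \epsilon'$, Theorem~\ref{theo:threeight} yields $\Ex_\mu[|f - g|] \leq O(\epsilon)$, hence $\Pr_\mu[f \neq \mathrm{sign}(g)] \leq \Ex_\mu[|f-g|] \leq O(\epsilon)$, and rescaling $\epsilon$ by the implied constant gives the PAC guarantee; output $h = \mathrm{sign}(g)$.

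The only genuinely new point — and the place I would be most careful — is that Theorem~\ref{theo:qKM} as stated assumes exact knowledge of $\mu$ in order to implement $H_\mu^n$, whereas in the PAC setting the learner only has $\mathrm{QEX}(f,\mu)$ and must estimate each $\mu_i$ by sampling. I would handle this by first drawing $\mathrm{poly}(n, 1/c, \log(1/\delta))$ classical samples (obtained by measuring outputs of $\mathrm{QEX}(f,\mu)$) to produce $\hat\mu$ with $\norm{\mu - \hat\mu}_\infty$ small, and then appeal to the error analysis of Section~\ref{app:error}, which shows that for $c$-bounded $\mathcal{D}_\mu$ the perturbation in the $\mu$-biased QFT, in the Fourier coefficients, and hence in the hypothesis, is controlled — the $c$-boundedness being exactly what keeps $\norm{\phi_{\mu,a}}_\infty$ and the Jacobian of $\mu \mapsto H_\mu$ bounded. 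Everything else is bookkeeping of polynomial running times and a union bound over the $O(1/\epsilon'^2)$ coefficients and the $\mu$-estimation step, giving the claimed efficient quantum PAC-learner for $s$-term DNF under $c$-bounded product distributions.
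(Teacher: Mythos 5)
Your proposal is correct and takes essentially the same route as the paper: the paper's own proof of the corollary is a one-line appeal to Feldman's Corollary~5.1, observing that the EKM subroutine is the only MQ-dependent step and that Theorem~\ref{theo:qKM} is a drop-in replacement; you spell out the bookkeeping (choice of $\epsilon'$, restriction to $B_d$, the iterative construction of $g$, running-time polynomiality) that the paper leaves implicit. You also correctly flag the one subtlety the paper handles separately in Section~\ref{app:error}, namely that $\mu$ is not given and must be estimated from samples with the $c$-boundedness controlling the resulting perturbation of $H_\mu^n$; making that dependence explicit inside the corollary's proof is a small improvement in rigor over the paper's presentation. (One tiny imprecision: your bound $(2-c)^{d/2}s = O(\sqrt{s/\epsilon}\cdot s)$ only holds for $c \geq 2-\sqrt{2}$; for smaller constant $c$ the exponent on $s/\epsilon$ is larger, but it is still polynomial in $s$ and $1/\epsilon$, so the conclusion is unaffected.)
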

We recall that the collection of the heavy Fourier coefficients of the DNF $f$ is the only step of Feldman's algorithm that requires MQ. The remaining of the algorithm makes use of the coefficients to construct a function $g$ that approximates $f$.

\section{Error analysis}
\label{app:error}
\noindent
In the main text we assumed that the vector $\mu$ parametrising the product distribution was given to the learner. Here we prove that, if $\mathcal{D}_\mu$ is $c$-bounded, it is possible to estimate $\mu$ introducing an error that can be made small at a cost that scales polynomially in $n$. We recall that $\mu \in [-1+c,1-c]^n$, $c\in (0,1]$, and $\mu_i = \Ex_\mu [x_i]$. 
By the Hoeffding bound we have that, with probability $1-\delta$, it is possible to approximate $\mu_i$ to $\epsilon$ accuracy $|\mu_i - \tilde{\mu_i}| \leq \epsilon$ using $O(\mathrm{log}(1/\delta) / \epsilon^2)$ samples.

We want to estimate the error introduced by approximating $H^n _\mu$ with $H_{\tilde{\mu}} ^n $ (note that the $\mu$-biased QFT is now parametrised by $\tilde{\mu}$) in terms of the operator norm. Let $A$ be an operator, the operators norm $\Norm{A}$ is defined as:
$$
\Norm{A} = \sup_{\ket{\psi} \neq 0} \frac{\Norm{A\ket{\psi}}}{\Norm{\ket{\psi}}}.
$$
The error analysis then requires to bound the quantity:
$$
\Norm{H^n _\mu - H_{\tilde{\mu}} ^n } \leq \gamma.
$$
In order to prove the bound we introduce a useful lemma:
\begin{lemma}
\label{lemma:prodUnit}
Let $A = A_n \cdots A_1$ be a product of unitary operators $A_j$. Assume that for every $A_j$ there exits an approximation $\tilde{A}_j$ such that $\norm{A_j-\tilde{A}_j} \leq \epsilon_j$. The follow inequality holds
$$
\Norm{A_n \cdots A_1 -  \tilde{A}_n \cdots \tilde{A}_1 } \leq \sum_j \epsilon_j.
$$
\end{lemma}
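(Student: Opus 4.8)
The plan is a standard hybrid (telescoping) argument: interpolate between the exact product $A_n\cdots A_1$ and the approximate product $\tilde A_n\cdots \tilde A_1$ by swapping one factor at a time. Concretely, for $k=0,1,\dots,n$ set
$$
B_k \;=\; \tilde A_n \cdots \tilde A_{k+1}\, A_k \cdots A_1,
$$
with the conventions $B_n = A_n\cdots A_1$ and $B_0 = \tilde A_n\cdots \tilde A_1$, so that the quantity to be bounded is $B_n - B_0$ and it telescopes as $B_n - B_0 = \sum_{k=1}^{n}(B_k - B_{k-1})$. Consecutive hybrids differ in exactly one factor, so
$$
B_k - B_{k-1} \;=\; \tilde A_n \cdots \tilde A_{k+1}\,(A_k - \tilde A_k)\, A_{k-1}\cdots A_1.
$$

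Next I would apply the triangle inequality to the telescoping sum and submultiplicativity of the operator norm to each term, obtaining
$$
\Norm{A_n\cdots A_1 - \tilde A_n\cdots \tilde A_1} \;\le\; \sum_{k=1}^{n} \Norm{\tilde A_n \cdots \tilde A_{k+1}}\cdot \Norm{A_k - \tilde A_k}\cdot \Norm{A_{k-1}\cdots A_1}.
$$
Here the middle factor is $\le \epsilon_k$ by hypothesis, and the trailing block $A_{k-1}\cdots A_1$ has operator norm exactly $1$ since each $A_j$ is unitary, a product of unitaries is unitary, and unitaries preserve the Euclidean norm. For the leading block $\tilde A_n\cdots\tilde A_{k+1}$ I would use that the approximants $\tilde A_j$ are themselves unitary --- which is exactly the situation in our application, where $\tilde A_j = H_{\tilde\mu}$ is the single-qubit $\mu$--biased QFT built from the estimated parameter --- so this block also has norm $1$. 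Each summand is then $\le \epsilon_k$, and the claim follows.

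There is no genuinely hard step; the only point requiring care is that both blocks surrounding $(A_k - \tilde A_k)$ in every hybrid term have operator norm at most $1$. If one is unwilling to assume the $\tilde A_j$ are unitary, one only has $\Norm{\tilde A_j} \le \Norm{A_j} + \epsilon_j = 1 + \epsilon_j$, which degrades the bound to $\sum_k \epsilon_k \prod_{j>k}(1+\epsilon_j)$; for the small errors relevant here this is still $(1+o(1))\sum_k \epsilon_k$, but to obtain the clean statement as written I would either add the hypothesis that each $\tilde A_j$ is unitary (or a contraction) or simply invoke unitarity of $H_{\tilde\mu}$. I would also note that the direction of interpolation is immaterial --- swapping from the left instead produces the symmetric term $A_n\cdots A_{k+1}(A_k - \tilde A_k)\tilde A_{k-1}\cdots \tilde A_1$ and yields the same bound --- and that the tensor-product operators $H^n_\mu$ we ultimately care about are covered by writing $H_\mu\otimes\cdots\otimes H_\mu$ as the product of the $n$ one-slot unitaries of the form $I\otimes\cdots\otimes H_\mu\otimes\cdots\otimes I$.
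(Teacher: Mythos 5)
Your telescoping argument is the unrolled form of the paper's induction (which at the $(k{+}1)$-st step isolates $(A_{k+1}-\tilde A_{k+1})$ and bounds $\norm{A_{k+1}}\norm{X_k-\tilde X_k}+\norm{A_{k+1}-\tilde A_{k+1}}\norm{\tilde X_k}$), so the two proofs are essentially the same. You are also right that the clean bound implicitly requires each $\tilde A_j$ to be unitary (or at least a contraction) so that $\norm{\tilde A_n\cdots\tilde A_{k+1}}\le 1$; the paper's proof makes the same implicit assumption when it takes $\norm{\tilde X_k}=1$, and it holds in the intended application because $H_{\tilde{\mu}}$ is unitary.
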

\begin{proof}
{We prove by induction. The base step follows from the assumptions. For the inductive step let $X_{k} = A_k \cdots A_1 $ and $\tilde{X}_{k} = \tilde{A}_k \cdots \tilde{A}_1$. Because the inductive hypothesis holds we have
$$
\norm{X_{k} - \tilde{X}_{k}} \leq \sum_{j=1} ^k \epsilon_j.
$$
By making use of the triangular inequality, the induction hypothesis, and noting that the product of unitaries is unitary we have
\begin{align*}
\Norm{A_{k+1} X_k - \tilde{A}_{k+1} \tilde{X}_k} &= \Norm{A_{k+1}\left( X_k - \tilde{X}_k \right) + \left( A_{k+1} - \tilde{A}_{k+1} \right) \tilde{X}_k } \\
&\leq \Norm{A_{k+1}\left( X_k - \tilde{X}_k \right)} + \Norm{\left( A_{k+1} - \tilde{A}_{k+1} \right) \tilde{X}_k }\\
&= \Norm{A_{k+1}} \Norm{ X_k - \tilde{X}_k } + \Norm{A_{k+1} - \tilde{A}_{k+1}} \Norm{\tilde{X}_k } \\
&\leq \sum_{j=1} ^k \epsilon_j + \epsilon_{k+1} \\
&= \sum_{j=1} ^{k+1} \epsilon_j
\end{align*}
}
\end{proof}
Let $H_j = I \otimes \cdots I \otimes H_{\mu_j} \otimes I \otimes \cdots I$ and $\tilde{H}_j = I \otimes \cdots I \otimes H_{\tilde{\mu}_j} \otimes I \otimes \cdots I$. By Lemma~\ref{lemma:prodUnit} we have that
\begin{equation}
\label{eq:simpbound}
\Norm{H^n _\mu - H_{\tilde{\mu}} ^n } \leq \sum_{j=1} ^n \Norm{H_j - \tilde{H}_j}.
\end{equation}
The bound on $\Norm{H_j - \tilde{H}_j}$ can be simplified using the following property of the operator norm
$\norm{A\otimes B}= \norm{A}\norm{B}$,
\begin{align*}
\Norm{H_j - \tilde{H}_j} &= \Norm{I \otimes \cdots I \otimes (H_{\mu_j} - H_{\tilde{\mu}_j}) \otimes I \otimes \cdots I }\\
&= \norm{I} \cdots \Norm{I}\Norm{(H_{\mu_j} - H_{\tilde{\mu}_j})} \Norm{I}  \cdots \Norm{I }\\
&= \Norm{H_{\mu_j} - H_{\tilde{\mu}_j}}.
\end{align*}
The problem of bounding Eq.~\ref{eq:simpbound} is then equivalent to bounding $\norm{H_{\mu_i}- H_{\tilde{\mu}_i}}$. Let $\ket{\psi}=\sum_{x \in \{-1,1\} } \alpha_x \ket{x}$, we have that
\begin{align*}
\Norm{(H_{\mu_i}- H_{\tilde{\mu}_i})\ket{\psi}} = \Norm{\sum_{x \in \{-1,1\}} \sum_{a \in \{0,1\}} \left(\sqrt{\mathcal{D}_{\mu_i} (x)}\phi_{\mu_i,a}(x) - \sqrt{\mathcal{D}_{\tilde{\mu}_i} (x)}\phi_{\tilde{\mu}_i,a}(x) \right) \alpha_x \ket{a}}
\end{align*}
where $\phi_{\mu,a} (x) = \prod_{i:a_i = 1} (x_i - \mu_i)/\sqrt{1-\mu_i ^2}$ and $\mathcal{D}_\mu(x) = \prod _{i: x_i = 1} (1+\mu_i)/2  \prod _{i: x_i = -1}  (1-\mu_i)/2$.
We have to estimate the following quantity for a generic $a$, $x$
\begin{align*}
S &= \left| \sqrt{\mathcal{D}_{\mu_i} (x)}\phi_{\mu_i,a}(x) - \sqrt{\mathcal{D}_{\tilde{\mu}_i} (x)}\phi_{\tilde{\mu}_i,a}(x) \right | \\
&= \left| \frac{(x_i - \mu_i)\sqrt{1-\tilde{\mu}_i ^2}\sqrt{\mathcal{D}_{\mu_i}(x)} - (x_i - \tilde{\mu}_i)\sqrt{1-\mu_i ^2}\sqrt{\mathcal{D}_{\tilde{\mu}_i}(x)} }{\sqrt{1-\mu_i ^2}\sqrt{1-\tilde{\mu}_i ^2} }\right |.
\end{align*}
Recall that for every $i$ it holds $1-\mu_i ^2 \geq c^2$, $1-\tilde{mu}_i ^2 \geq c^2$, $|\mu_i - \tilde{\mu}_i| \leq \epsilon$, $x_i \in \{-1,1\}$. By the triangle inequality we have that
\[
\begin{split}
S &\leq \frac{1}{c^2} \left| (x_i - \mu_i)\sqrt{1-\tilde{\mu}_i ^2}\sqrt{\mathcal{D}_{\mu_i}(x)} - (x_i - \tilde{\mu}_i)\sqrt{1-\mu_i ^2}\sqrt{\mathcal{D}_{\tilde{\mu}_i}(x)} \right | \\
&= \frac{1}{c^2} \Big| (x_i - \mu_i)\left( \sqrt{1-\tilde{\mu}_i ^2}\sqrt{\mathcal{D}_{\mu_i}(x)}  - \sqrt{1-\mu_i ^2}\sqrt{\mathcal{D}_{\tilde{\mu}_i}(x)}\right) \\ 
&\phantom{{}={}} \qquad + (\tilde{\mu}_i - \mu_i) \sqrt{1-\mu_i ^2}\sqrt{\mathcal{D}_{\tilde{\mu}_i}(x)} \Big | \\
&\leq \frac{1}{c^2} \Big( \left| (x_i - \mu_i)\left( \sqrt{1-\tilde{\mu}_i ^2}\sqrt{\mathcal{D}_{\mu_i}(x)} - \sqrt{1-\mu_i ^2}\sqrt{\mathcal{D}_{\tilde{\mu}_i}(x)}\right)\right | \\
&\phantom{{}={}} \qquad + \left|(\tilde{\mu}_i - \mu_i) \sqrt{1-\mu_i ^2}\sqrt{\mathcal{D}_{\tilde{\mu}_i}(x)} \right | \Big) \\
&\leq \frac{1}{c^2} \left( (2-c)\left|\left( \sqrt{1-\tilde{\mu}_i ^2}\sqrt{\mathcal{D}_{\mu_i}(x)} - \sqrt{1-\mu_i ^2}\sqrt{\mathcal{D}_{\tilde{\mu}_i}(x)}\right)\right | + \epsilon \right) \\
&\leq \frac{1}{c^2} \left( (2-c)\left( \left| \sqrt{\mathcal{D}_{\mu_i}(x)} - \sqrt{\mathcal{D}_{\tilde{\mu}_i}(x)} \right| +  \left| \sqrt{1-\mu_i ^2} - \sqrt{1-\tilde{\mu}_i ^2} \right|\right ) + \epsilon \right).
\end{split}
\]
If we note that 
$$
\left| \sqrt{\mathcal{D}_{\mu_i}(x)} - \sqrt{\mathcal{D}_{\tilde{\mu}_i}(x)} \right| = \left|\frac{\tilde{\mu_i} - \mu_i}{2(\sqrt{\mathcal{D}_{\mu_i}(x)} + \sqrt{\mathcal{D}_{\tilde{\mu}_i}(x)} )} \right| \leq \frac{\epsilon}{\sqrt{8c}}
$$
and 
$$
\left|\sqrt{1-\mu_i ^2} - \sqrt{1-\tilde{\mu}_i ^2} \right| = \left|\frac{\tilde{\mu}_i ^2 - \mu_i ^2}{\sqrt{1-\mu_i ^2} + \sqrt{1-\tilde{\mu}_i ^2} }\right| \leq \frac{\epsilon}{2c}
$$ we have
\begin{equation}
\label{eq:finbound}
S \leq t \epsilon,
\end{equation}
where $t = ((2 - c) (\frac{1}{\sqrt{8 c}} + \frac{1}{ 2 c}) + 1 )/c^2$. By making use of Eq.~\ref{eq:finbound} and noting that $\sum_x |\alpha_x| \leq 1$ we have
\begin{align*}
\norm{H_{\mu_i} - H_{\tilde{\mu}_i} } &\leq \sum_{x,a} \Norm{ \left(\sqrt{\mathcal{D}_{\mu_i} (x)}\phi_{\mu_i,a}(x) - \sqrt{\mathcal{D}_{\tilde{\mu}_i} (x)}\phi_{\tilde{\mu}_i,a}(x) \right) \alpha_x \ket{a}} \\
&\leq t \epsilon \left(2\sum_x |\alpha_x| \right) \\
&\leq 2t \epsilon.
\end{align*}
From which it follows that 
\begin{equation}
\label{eq:lastbound}
\norm{H_\mu ^n - H_{\tilde{\mu}} ^n } \leq 2nt \epsilon.
\end{equation}
Eq.~\ref{eq:lastbound} guarantees that if one can approximate every $\mu_i$ with linear precision, \textit{i.e.} with $\epsilon = O(1 / n)$, it is possible to control the approximation error. Recall that by using the Hoeffding bound we can approximate $\mu_i$ to linear precision using $m  = O(n^2 \log (1/ \delta))$ examples.

\subsection*{Acknowledgements}

We thank Carlo Ciliberto for helpful discussions on the DKW inequality and Matthias Caro for comments on an earlier draft.
This research was supported in part by the National Science
Foundation under Grant No. NSF PHY-1748958 and by the Heising-Simons
Foundation.
AR is supported by an EPSRC DTP Scholarship, by QinetiQ Ltd., and by the Simons Foundation through \textit{It from Qubit: Simons Collaboration on Quantum Fields, Gravity, and Information}.
SS is supported by the Royal Society, EPSRC, the National Natural Science Foundation of China, and the grant ARO-MURI W911NF-17-1-0304 (US DOD, UK MOD and UK EPSRC under the Multidisciplinary University Research Initiative).

\printbibliography

\end{document}